\newtheorem{lemma}{Lemma}[section]
\newtheorem{theorem}{Theorem}[section]
\begin{document}

    \title{Deep Inertia $L_p$ Half-quadratic Splitting Unrolling Network for Sparse View CT Reconstruction}

    \author{Yu~Guo,~\emph{Graduate Student Member, IEEE}, Caiying Wu,  Yaxin Li,   
        Qiyu~Jin,~\emph{Member, IEEE} and Tieyong Zeng 
       \thanks{This work was supported by National Natural Science Foundation of China (No. 12061052), Natural Science Foundation of Inner Mongolia Autonomous Region (No. 2024LHMS01006, 2024MS01002), Young Talents of Science and Technology in Universities of Inner Mongolia Autonomous Region (No. NJYT22090), Special Funds for Graduate Innovation and Entrepreneurship of Inner Mongolia University (No.~11200-121024), Inner Mongolia University Independent Research Project (No. 2022-ZZ004) and the network information center of Inner Mongolia University. 
       \textit{(Corresponding author: Caiying Wu)}}
    	\thanks{Y. Guo, C. Wu, Y. Li and Q. Jin are with the School of Mathematical Science, Inner Mongolia University, Hohhot 010020, China. (e-mail: yuguomath@aliyun.com; wucaiyingyun@163.com). }
        \thanks{T. Zeng is with Department of Mathematics, The Chinese University of Hong Kong, Satin, Hong Kong.}
      }

    \markboth{Manuscript to IEEE Signal Processing Letters 2024}%
	{Shell \MakeLowercase{\textit{et al.}}: Bare Demo of IEEEtran.cls for IEEE Journals}
	
	\maketitle

	\begin{abstract}
    Sparse view computed tomography (CT) reconstruction poses a challenging ill-posed inverse problem, necessitating effective regularization techniques. In this letter, we employ $L_p$-norm ($0<p<1$) regularization to induce sparsity and introduce inertial steps, leading to the development of the inertial $L_p$-norm half-quadratic splitting algorithm. We rigorously prove the convergence of this algorithm. Furthermore, we leverage deep learning to initialize the conjugate gradient method, resulting in a deep unrolling network with theoretical guarantees. Our extensive numerical experiments demonstrate that our proposed algorithm surpasses existing methods, particularly excelling in fewer scanned views and complex noise conditions.
	\end{abstract}
	
	\begin{IEEEkeywords}
	$L_p$-norm, inertial, wavelet, unrolling, Sparse View
	\end{IEEEkeywords}

	\section{Introduction}

    Computed tomography (CT) is crucial in modern medicine. However, the ionizing radiation from X-rays has led to interest in low-dose CT. Sparse view CT is an innovative approach that reduces projection data, thereby minimizing radiation and shortening scan times. However, it can cause image degradation, including streak artifacts \cite{BS10}. Mathematically, the CT image reconstruction problem model can be succinctly described as follows:
	\begin{equation}
		f = Au + n,
		\label{pro1}
	\end{equation}
    where $f$ is the measured low quality image, $A$ is the measurement operator, $u$ is the original high quality image, and $n$ is the noise. In fact, in incomplete data CT reconstruction, the matrix $A$ is noninvertible. Hence, the inverse problem (\ref{pro1}) is ill-posed.

    Classical CT reconstruction techniques, such as Algebraic Reconstruction Techniques (ART) \cite{GB70art} and Filtered Backprojection (FBP) \cite{KA02fbp,Yuchuan2005Relation}, struggle to effectively handle sparse view CT. To address this, numerous algorithms have been proposed, with regularization models gaining widespread use. These models typically take the following form:
    \begin{equation}
    	\min_{u} \frac{1}{2} \|Au-f\|^{2} + \lambda R(u),
    \label{pro2}
    \end{equation}
    where $R(u)$ represents the regularization term, and the weighted parameter $\lambda>0$ balances the data fidelity and regularization terms. Various regularization terms, such as BM3D \cite{DF07BM3D,zhao2019ultra,guo2021fast}, total variation \cite{Mahmood2018Adaptive}, sparsity \cite{zhang2018resolution,Bao2019Convolutional}, low rank \cite{kim2014sparse}, and wavelets \cite{DL13JSRmodel,ZL20metainv,Sakhaee2017Joint}, have been employed to tackle this challenge. 
    
    With the advent of deep learning in image processing, an increasing number of sparse view CT reconstructions are leveraging deep learning algorithms to achieve superior performance. These primarily include direct reconstruction using end-to-end network structures \cite{JM17fbpconv,Du2019Visual}, and deep unrolling combined with traditional algorithms \cite{AO18PDnet,ZL20metainv,shi2024mud,Pan2024Iterative,Wu2021DRONE}. The advantage of deep unfolding networks is that they are driven by both model and data \cite{Shi2024Coupling}, meaning that prior knowledge in the domain can assist in learning. Recently, there are new techniques introduced to deep learning, such as attention mechanisms \cite{Wu2023Deep}, and diffusion models \cite{Wu2024Wavelet,Xu2024Stage, Wu2024Multi}, etc. Currently, state-of-the-art sparse-view CT reconstruction results are obtained by score-based model \cite{10570449} and diffusion model \cite{10577271}. While deep learning algorithms offer improved performance, they are dependent on high-quality pairwise data and lack theoretical guarantees, rendering them as black-box algorithms.

    In this letter, we address the sparse view CT reconstruction problem by leveraging the superior sparsity of the $L_p$-norm ($0<p<1$) in conjunction with wavelet transform. To tackle this non-convex and non-continuous problem, we introduce the Inertial $L_p$-norm Half-Quadratic Splitting method (IHQS$_p$), which employs inertial steps to expedite convergence. We establish the convergence of the IHQS$_p$ algorithm. During the CT image reconstruction, the corresponding subproblem is resolved using the conjugate gradient (CG) method. Considering the n-step quadratic convergence of CG, we propose using a network trained via deep learning as the initializer for CG. In the context of the IHQS$_p$ algorithm, the network solely provides the initial value for the CG, thereby not influencing the algorithm’s convergence. From a deep learning perspective, we derive a learning algorithm IHQS$_p$-Net with convergence guarantees, which is deep unrolled by the IHQS$_p$ algorithm. We validate our proposed IHQS$_p$-CG and IHQS$_p$-Net through extensive numerical experiments on two datasets, demonstrating their exceptional performance.

	\section{Inertial HQS$_p$-CG Algorithm and Convergence Analysis}
    \label{sec2}

    \begin{algorithm}[t]
    	\caption{Inertial HQS$_p$-CG algorithm}
    	\label{alg1}
    	\begin{algorithmic}
    		\STATE Initialization: $u^{0}=u_{{\rm FBP}}$, $z^{0}=Wu^{0}, \bar{u}^0=u^{0}, \bar{z}^0=z^{0}, \alpha, \beta, \lambda, \gamma, \epsilon $.
    		\STATE for $k=0:\mathrm{Maxlter}$ do:
    		\STATE \qquad 1. Updata $z^{k+1} = {\rm prox}_{p,\frac{\gamma}{\lambda}}(W\bar{u}^{k})$.
            \STATE \qquad 2. Updata $\bar{z}^{k+1} = z^{k+1} + \alpha(z^{k+1} - \bar{z}^{k})$.
    		\STATE \qquad 3. Updata $u^{k+1} = {\rm CG}(u^{k}, \bar{z}^{k+1}, \gamma)$.
            \STATE \qquad 4. Updata $\bar{u}^{k+1} = u^{k+1} + \beta(u^{k+1} - \bar{u}^{k})$.
            \STATE \qquad 5. Stopping criterion: $\frac{\| \bar{u}^{k+1} - \bar{u}^{k} \|}{\| \bar{u}^{k} \|} \le \epsilon $.
    		\STATE end.
    		\STATE Output: $\bar{u}^{k+1}$
    	\end{algorithmic}
    \end{algorithm}

    \subsection{Inertial HQS$_p$-CG Algorithm}   
    In order to take advantage of the sparsity of the CT image, we use the $L_p$-norm with the wavelet transform to form the regularization term in (\ref{pro2}), i.e.
    \begin{equation}
    	\min_{u} \frac{1}{2} \|Au-f\|^{2} + \lambda \|Wu\|^{p}_{p},
    \label{pro3}
    \end{equation}
    where $0 < p < 1$, $W=(W_{1},W_{2},...,W_{m})$ is a $m$-channel operator with $W_{i}^TW_i=I (1\le i \le m)$. The operator $W$ is chosen as the highpass components of the piecewise linear tight wavelet frame transform  \cite{RS97operator}. Solving (\ref{pro3}) directly is challenging due to the discontinuity in the $L_p$ norm. To solve (\ref{pro3}), we introduce the auxiliary variable $z$, and use the half-quadratic splitting method to obtain the following problem:
    \begin{equation}
		\min_{u,z} \frac{1}{2} \|Au-f\|^{2}_{2} + \lambda \|z\|^{p}_{p} + \frac{1}{2} \sum_{i=1}^{m} \gamma_{i} \|W_{i}u-z_{i}\|^{2}_{2}, 
	\label{eq6}
	\end{equation}
    where $\lambda > 0$ and $\gamma = (\gamma_{1},\gamma_{2},...,\gamma_{m})$ with $\gamma_{i}>0$ $(i=1,2,...,m)$ are weight parameters. The above problem can be split into $z$-subproblem and $u$-subproblem solved in alternating iterations. To speed up the convergence, inspired by the inertia algorithm \cite{alvarez2001inertial,ochs2014ipiano}, we introduce inertia steps in solving the two subproblems, i.e.,
    \begin{align}
        z^{k+1} &= \arg\min_{z} \lambda \|z\|^{p}_{p} + \frac{1}{2}\sum_{i=1}^{m}\gamma_{i}\|W_{i}\bar{u}^{k}-z_{i}\|^{2}_{2}, \label{eq7} \\
		\bar{z}^{k+1} &= z^{k+1} + \alpha(z^{k+1} - \bar{z}^{k}), \label{eq8} \\
		u^{k+1} &= \arg\min_{u} \frac{1}{2}\|Au-f\|^{2}_{2} +\frac{1}{2} \sum_{i=1}^{m} \gamma_{i} \|W_{i}u-\bar{z}_{i}^{k+1}\|^{2}_{2}, \label{eq9} \\
	  \bar{u}^{k+1} &= u^{k+1} + \beta(u^{k+1} - \bar{u}^{k}),  \label{eq10}
    \end{align} 
    where (\ref{eq7}) and (\ref{eq9}) are subproblem solving steps, (\ref{eq8}) and (\ref{eq10}) are inertia steps, and $\alpha \in [0,1)$ and $\beta \in [0,\frac{\sqrt{5}-1}{2})$ are inertia weight parameters.

     The $z^{k+1}$-update step (\ref{eq7}) can be solved by the proximal operator ${\rm prox}_{p,\eta}$, which is defined as
    \begin{equation}
    	{\rm prox}_{p,\eta}(t)=\mathop{\arg\min}\limits_{x}{\|x\|_{p}^{p}+\frac{\eta}{2}\|x-t\|^{2}}.
    \end{equation}
    where $0<p<1$, it can be computed as \cite{MS12l_q}
	\begin{equation}
	{\rm prox}_{p,\eta}(t)=\left\{
	\begin{array}{ll}
		0, & \left|t\right|<\tau\\
		\{0,{\rm sign}(t)\rho \}, & \left|t\right|=\tau\\ 
		{\rm sign}(t)g, & \left|t\right|>\tau
	\end{array}
	\right.
	\end{equation}
	where $\rho = ( 2(1-p) / \eta)^\frac{1}{2-p}$, $\tau=\rho + p\rho^{p-1} / \eta$, $g$ is the solution of $h(g)=pg^{p-1}+\eta g-\eta|t|=0$ over the region $(\rho,|t|)$. Since $h(g)$ is convex, when $|t|>\tau$, $g$ can be efficiently solved using Newton’s method.

    For the $u^{k+1}$-update step (\ref{eq9}), according to the Euler-Lagrange equations, we have: 
    \begin{align*}
        u^{k+1}=(A^{T}A+\sum_{i=1}^{m}\gamma_{i} W_{i}^{T}W_{i})^{-1}(A^{T}f+\sum_{i=1}^{m}\gamma_{i}W_{i}^{T}\bar{z}_{i}^{k}).
    \label{eq11}
    \end{align*}
    To avoid the inverse, this equation can be solved using the conjugate gradient method. We call our algorithm as Inertial HQS$_p$-CG (IHQS$_p$-CG) for short and summarized in Algorithm \ref{alg1}.
	Next, we establish the global convergence of the IHQS$_p$-CG algorithm.

    \begin{figure}[t]
    \begin{center}
    \footnotesize
    \addtolength{\tabcolsep}{-4pt}
    \renewcommand\arraystretch{0.5}
    {\fontsize{10pt}{\baselineskip}\selectfont
    \begin{tabular}{cc}
        \includegraphics[width=.22\textwidth]{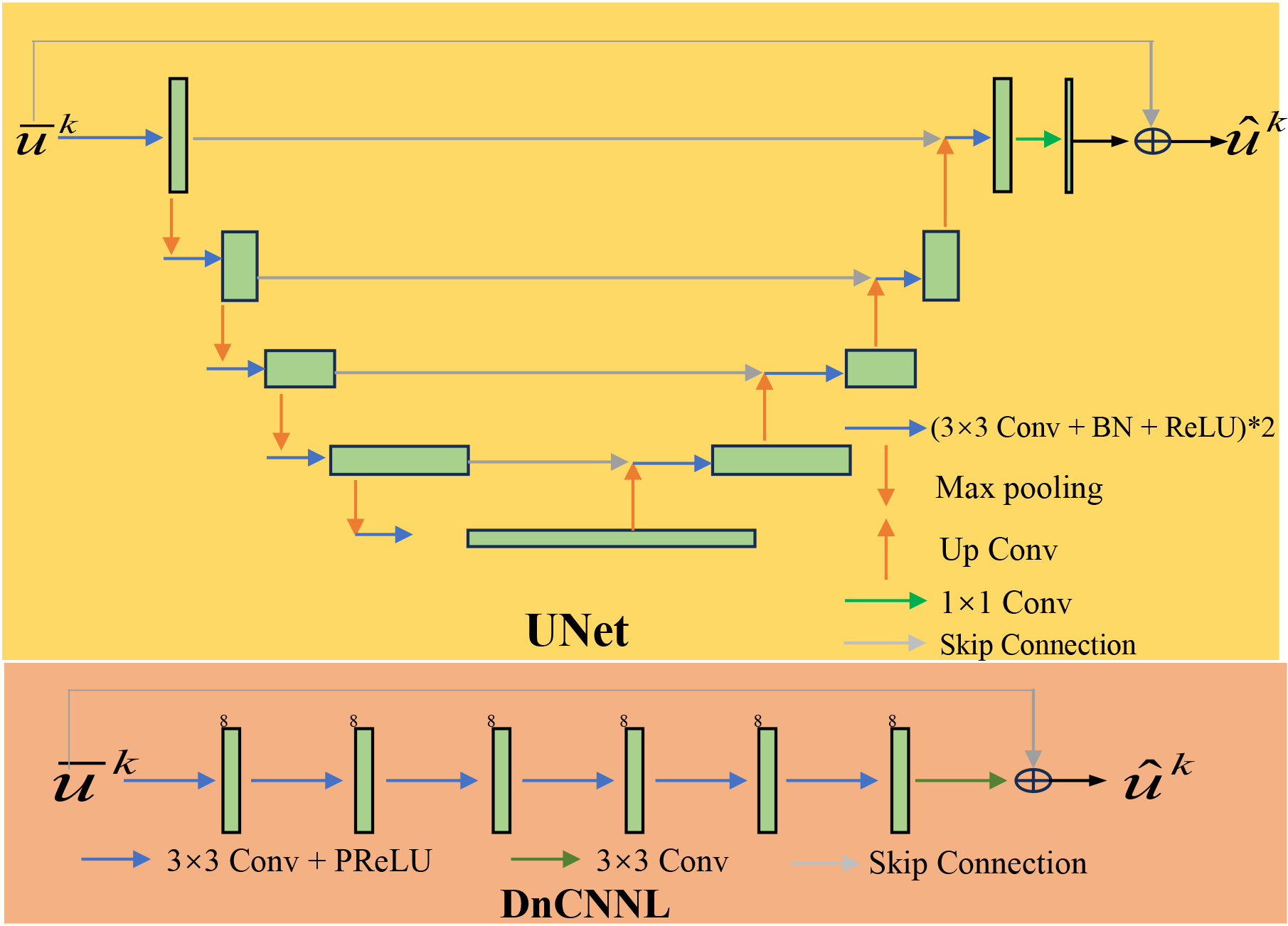} \\
        \includegraphics[width=.4\textwidth]{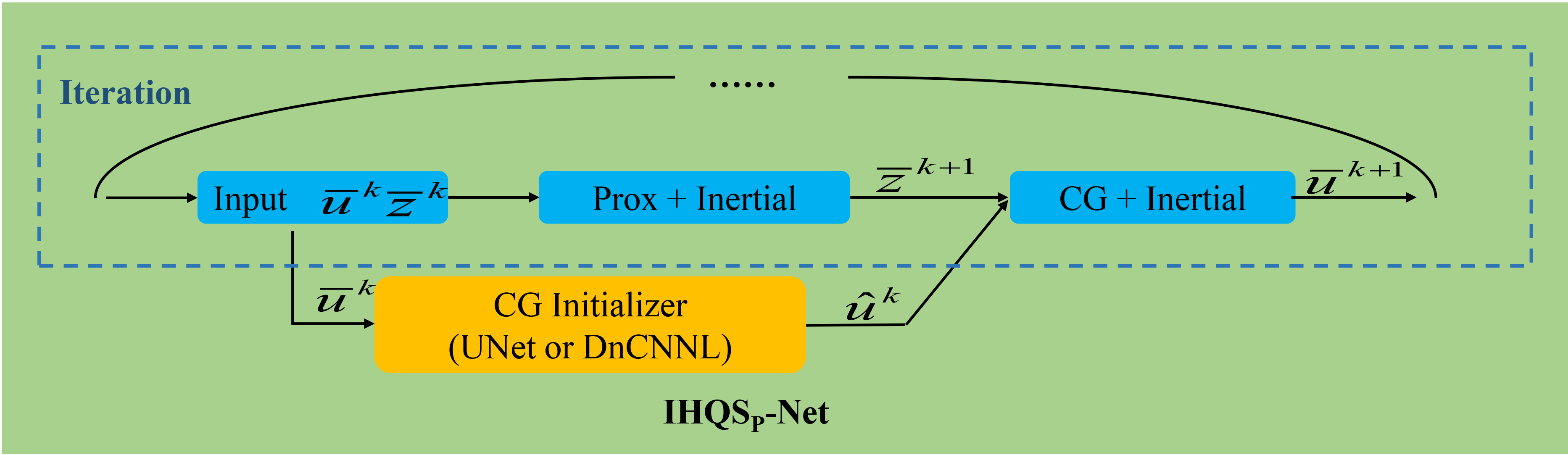}  \\
    \end{tabular}
    }
    \end{center}
    \caption{IHQS$_p$-Net algorithm framework diagram.}
    \label{net}
    \end{figure}
    
	\subsection{Convergence analysis}
	In this section, we provide a convergence analysis of Algorithm \ref{alg1}.

	\begin{lemma} 
    \label{lemma1}
	Suppose that the sequences $\left\{v^{k}\right\}$ and $\left\{\bar{v}^{k}\right\}$ generated via Algorithm \ref{alg1}, $0<\beta<\frac{\sqrt{5}-1}{2}$, then the sets $\left\{v^{k}\right\}$ and $\left\{\bar{v}^{k}\right\}$ are bounded.
    \end{lemma}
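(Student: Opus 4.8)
The plan is to exhibit a Lyapunov (energy) functional that decreases monotonically along the iterates of Algorithm~\ref{alg1} and to read off boundedness from its coercivity. Throughout write $\Phi(u,z)=\frac12\|Au-f\|_2^2+\lambda\|z\|_p^p+\frac12\sum_{i=1}^m\gamma_i\|W_iu-z_i\|_2^2$ for the objective in (\ref{eq6}) and $M:=A^TA+\sum_{i=1}^m\gamma_i W_i^TW_i$ for the matrix inverted in the CG step. The key structural fact, needed anyway for the $u$-subproblem to be well posed, is that $M$ is symmetric positive definite, $M\succeq\mu I$ with $\mu>0$ (because $A$ does not annihilate the directions left untouched by the high-pass frame operator); hence $\Phi\ge 0$ and $\Phi$ has bounded sublevel sets.

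First I would record the two consequences of one sweep of (\ref{eq7})--(\ref{eq10}). Since $z^{k+1}$ is the minimizer in (\ref{eq7}), which is exactly $\min_z\Phi(\bar u^k,z)$, we get $\Phi(\bar u^k,z^{k+1})\le\Phi(\bar u^k,z)$ for every $z$; and since $u^{k+1}$ is the minimizer of the $\mu$-strongly convex quadratic $u\mapsto\Phi(u,\bar z^{k+1})$ in (\ref{eq9}), we get $\Phi(u^{k+1},\bar z^{k+1})+\tfrac\mu2\|u^{k+1}-u^k\|_2^2\le\Phi(u^k,\bar z^{k+1})$. Next I would eliminate the extrapolation mismatch between these inequalities (anchored at $\bar u^k$, $\bar z^{k+1}$) and the quantities I actually want to telescope, $\Phi(u^k,z^k)$ and $\Phi(u^{k+1},z^{k+1})$: I would expand the quadratic-in-$u$ part of $\Phi$ around $u^k$, use the optimality condition $Mu^k=A^Tf+\sum_i\gamma_iW_i^T\bar z_i^k$ from the previous step, and use the identities $\bar u^{k+1}-\bar u^k=(1+\beta)(u^{k+1}-\bar u^k)$ and $\bar z^{k+1}-\bar z^k=(1+\alpha)(z^{k+1}-\bar z^k)$ coming from (\ref{eq8}) and (\ref{eq10}); the cross terms produced are then split by Young's inequality. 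Chaining everything gives an inequality of the form
\begin{equation*}
\Phi(u^{k+1},z^{k+1})+a\,\|u^{k+1}-\bar u^k\|_2^2 \;\le\; \Phi(u^k,z^k)+a'\,\|u^k-\bar u^{k-1}\|_2^2 ,
\end{equation*}
with nonnegative constants $a,a'$ depending on $\mu$, the $\gamma_i$, $\alpha$, $\beta$; the role of the hypothesis $0<\beta<\tfrac{\sqrt5-1}{2}$, i.e.\ $\beta^2+\beta<1$, is precisely to force $a\ge a'\ge 0$, so that $\mathcal E_k:=\Phi(u^k,z^k)+a'\|u^k-\bar u^{k-1}\|_2^2$ is non-increasing and $\mathcal E_k\le\mathcal E_1$ for all $k$.

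From $\mathcal E_k\le\mathcal E_1$ and $\Phi\ge 0$ I get $\lambda\|z^k\|_p^p\le\mathcal E_1$, hence $\{z^k\}$ is bounded; then $\Phi(u^k,z^k)\le\mathcal E_1$ together with $M\succeq\mu I$ bounds $\{u^k\}$, so $\{v^k\}=\{(u^k,z^k)\}$ lies in a fixed sublevel set of $\Phi$ and is bounded. Finally $\bar z^{k}=z^{k}+\alpha(z^{k}-\bar z^{k-1})$ with $\alpha\in[0,1)$ gives, by a geometric-series estimate, boundedness of $\{\bar z^k\}$ (alternatively one may use directly that the $L_p$ proximal operator shrinks toward $0$, so $\|z^{k+1}\|\le\|W\bar u^k\|\le\|\bar u^k\|$), and $\bar u^{k}=u^{k}+\beta(u^{k}-\bar u^{k-1})$ then shows $\{\bar u^k\}$, hence $\{\bar v^k\}$, is bounded.

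The main obstacle is the $z$-block: because $\|\cdot\|_p^p$ is non-convex and non-smooth, the $z$-update yields only the plain minimality inequality $\Phi(\bar u^k,z^{k+1})\le\Phi(\bar u^k,z^k)$ --- no strong-convexity-type surplus $-\|z^{k+1}-z^k\|^2$ --- and yet the over-relaxation step (\ref{eq8}) can increase $\Phi$, so that increase has to be absorbed using only the $\gamma_i$-strong convexity of the quadratic penalty together with the shrinkage property of $\mathrm{prox}_{p,\gamma/\lambda}$. Tracking the resulting constants through the chain of inequalities so that the coefficient of $\|u^{k+1}-\bar u^k\|^2$ does not deteriorate is what pins down the admissible value $\beta<\tfrac{\sqrt5-1}{2}$ and is the most delicate part of the argument; a secondary point is checking that $M\succ 0$, which is what makes $\Phi$ coercive in the sense used above.
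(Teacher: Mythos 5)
Your route is genuinely different from the paper's, and the part you defer is exactly the part that would constitute the proof. Everything rests on the displayed descent inequality $\Phi(u^{k+1},z^{k+1})+a\|u^{k+1}-\bar u^k\|^2\le\Phi(u^k,z^k)+a'\|u^k-\bar u^{k-1}\|^2$ with $a\ge a'\ge0$, but you never derive it; you describe a strategy (``expand, use optimality, split by Young'') and then concede in your final paragraph that the $z$-block resists it. That concession is decisive. Chaining $\Phi$ through the four updates (\ref{eq7})--(\ref{eq10}) produces cross terms of the form $\alpha\sum_i\gamma_i\langle z_i^{k+1}-\bar z_i^{k},W_i(u^{k+1}-u^k)\rangle$ and $\sum_i\gamma_i\langle W_i^T(z_i^{k}-z_i^{k+1}),u^k-\bar u^k\rangle$; after Young's inequality each leaves behind a positive multiple of $\|z^{k+1}-\bar z^{k}\|^2$ or $\|z^{k+1}-z^{k}\|^2$, and nothing in the scheme supplies a matching negative term: the $z$-step is an exact minimization of a nonconvex objective, so it yields only $\Phi(\bar u^k,z^{k+1})\le\Phi(\bar u^k,z^k)$ with no quadratic surplus, and neither ``the $\gamma_i$-strong convexity of the quadratic penalty'' nor ``the shrinkage property of $\mathrm{prox}_{p,\gamma/\lambda}$'' manufactures one. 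Your assertion that $\beta^2+\beta<1$ is ``precisely'' the condition making $a\ge a'$ is likewise offered without any computation that produces this threshold. As written, monotonicity of $\mathcal E_k$ is not established, so boundedness does not follow.

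For comparison, the paper's proof never touches the objective. It takes $\{u^k\}$ to be bounded a priori because pixel values lie in $[0,255]$; reads boundedness of $\{\bar z^k\}$ off the normal equations of the $u$-step, $\sum_i\gamma_iW_i^T\bar z_i^k=(A^TA+\sum_i\gamma_iI)u^{k+1}-A^Tf$; obtains boundedness of $\{\bar u^k\}$ by unrolling the recursion (\ref{eq10}) into $\bar u^{k+1}=(1+\beta)u^{k+1}-\beta(1+\beta)u^k+\cdots$ and summing a geometric series, which is where $\beta(1+\beta)<1$, i.e. $\beta<\frac{\sqrt{5}-1}{2}$, is actually used; and finally gets $\{z^k\}$ from $(1+\alpha)z^{k+1}=\bar z^{k+1}+\alpha\bar z^k$. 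So the threshold on $\beta$ plays an entirely elementary role there, not the energy-balancing role you assign it. If you want to push your Lyapunov program through, you would need either a Lipschitz-gradient smooth block with explicit step sizes (iPiano-style) or proximal regularization of the $z$-update so that it returns a $-c\|z^{k+1}-z^k\|^2$ surplus; neither is available in Algorithm \ref{alg1} as stated.
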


    \begin{theorem}
    Let $\left\{v^{k}\right\}$ and $\left\{\bar{v}^{k}\right\}$ be the sequences generated by our algorithm. Then any cluster point of $\left\{v^{k}\right\}$ is the global minimum point of $L$. 
    \end{theorem}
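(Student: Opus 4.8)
The plan is to run the Lyapunov–function argument that is standard for inertial splitting schemes on non-convex problems (in the spirit of the iPiano-type analyses cited). Write $L(u,z)=\tfrac12\|Au-f\|_2^2+\lambda\|z\|_p^p+\tfrac12\sum_{i=1}^m\gamma_i\|W_iu-z_i\|_2^2$ and $v^k=(u^k,z^k)$. First I would establish a sufficient-decrease estimate for an auxiliary functional $\mathcal H_k=L(u^k,z^k)+a\,\|u^k-u^{k-1}\|_2^2+b\,\|z^k-z^{k-1}\|_2^2$ with constants $a,b>0$, namely $\mathcal H_{k+1}\le\mathcal H_k-\rho\big(\|u^{k+1}-u^k\|_2^2+\|z^{k+1}-z^k\|_2^2\big)$ for some $\rho>0$. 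The ingredients are: $z^{k+1}$ is the exact minimizer of $z\mapsto L(\bar u^k,z)$ (closed-form ${\rm prox}_{p,\eta}$); $u^{k+1}$ is the exact minimizer of the strongly convex quadratic $u\mapsto L(u,\bar z^{k+1})$, giving $L(u^{k+1},\bar z^{k+1})+\tfrac\sigma2\|u^{k+1}-\bar u^k\|_2^2\le L(\bar u^k,\bar z^{k+1})$ with $\sigma$ the smallest eigenvalue of $A^TA+\sum_i\gamma_iW_i^TW_i$; the inertial identities (\ref{eq8}) and (\ref{eq10}); and Young's inequality to absorb the cross terms. The parameter ranges $\alpha\in[0,1)$ and $\beta\in[0,\tfrac{\sqrt5-1}{2})$ enter here precisely to keep $a,b,\rho$ simultaneously positive: the golden-ratio threshold is the positive root of $\beta^2+\beta=1$, i.e.\ exactly what makes the coefficient $1-\beta-\beta^2$ (or its analogue in the estimate) strictly positive.

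Since $L\ge0$ is bounded below, $\{\mathcal H_k\}$ is non-increasing and convergent, and telescoping gives $\sum_k\big(\|u^{k+1}-u^k\|_2^2+\|z^{k+1}-z^k\|_2^2\big)<\infty$, so $u^{k+1}-u^k\to0$ and $z^{k+1}-z^k\to0$; by (\ref{eq8}) and (\ref{eq10}) this forces $\bar u^k-u^k\to0$ and $\bar z^k-z^k\to0$ as well. By Lemma~\ref{lemma1} the iterates are bounded, so cluster points exist; fix any cluster point $v^\ast=(u^\ast,z^\ast)$ and a subsequence $v^{k_j}\to v^\ast$, along which also $u^{k_j+1}\to u^\ast$, $z^{k_j+1}\to z^\ast$, $\bar u^{k_j}\to u^\ast$ and $\bar z^{k_j+1}\to z^\ast$.

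Next, pass to the limit in the two subproblems. The $z$-update is $z^{k+1}\in\arg\min_z L(\bar u^k,z)$; since $\|\cdot\|_p^p$ is lower semicontinuous and the proximal map has closed graph, letting $k=k_j\to\infty$ gives $z^\ast\in\arg\min_z L(u^\ast,z)$. The $u$-update solves a strongly convex quadratic whose minimizer depends continuously on $\bar z^{k+1}$, so $u^\ast=\arg\min_u L(u,z^\ast)$. Hence $v^\ast$ is a block-wise global minimizer of $L$, which already yields $0\in\partial L(v^\ast)$. The remaining step is to upgrade this to genuine global optimality of the non-convex $L$: using that the $z$-update returns the \emph{global} minimizer of each separable piece $\lambda|z_i|^p+\tfrac{\gamma_i}2|W_iu-z_i|^2$ through the explicit three-branch formula for ${\rm prox}_{p,\eta}$ (the thresholds $\rho,\tau$ and the Newton branch $g$), together with the tight-frame identities $W_i^TW_i=I$ coupling the blocks, one argues that the only configuration consistent with both subproblems at the limit is the global minimizer.

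I expect this last implication — from coordinate-wise (Nash) global minimality to a true global minimizer of a non-convex functional — to be the main obstacle, since for $0<p<1$ such an upgrade is not automatic and the proof must exploit the concrete thresholding geometry of ${\rm prox}_{p,\eta}$ rather than generic convex-analytic arguments. A secondary technical point is the bookkeeping in the first step: one must verify $A^TA+\sum_i\gamma_iW_i^TW_i\succ0$ (so $\sigma>0$ is available) and track the inertial cross terms carefully, so that $a,b,\rho$ can be chosen positive throughout the admissible ranges of $\alpha$ and $\beta$.
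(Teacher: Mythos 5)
Your route to a stationary point is genuinely different from the paper's and, for that part, more careful. The paper runs no Lyapunov or sufficient-decrease argument at all: it takes the boundedness from Lemma~\ref{lemma1}, extracts convergent subsequences, uses the inertial identities $\bar{z}^{k+1}+\alpha\bar{z}^{k}=(1+\alpha)z^{k+1}$ and $\bar{u}^{k+1}+\beta\bar{u}^{k}=(1+\beta)u^{k+1}$ to conclude $\bar{z}^{*}=z^{*}$ and $\bar{u}^{*}=u^{*}$ in the limit, and then passes to the limit in the first-order optimality conditions of the two subproblems using closedness of the subdifferential. Your $\mathcal{H}_k$ sufficient-decrease machinery, which yields $\|v^{k+1}-v^{k}\|\to 0$, is extra relative to the paper, but it is in fact the kind of control one would need to make the paper's limit passage rigorous (taking limits along a subsequence $k_j$ in relations that couple indices $k_j$ and $k_j+1$ tacitly assumes consecutive iterates converge to the same point, which boundedness alone does not give).

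The genuine gap is the one you yourself flag: nothing in your proposal actually bridges from ``$v^{*}$ is a blockwise minimizer / stationary point'' to ``$v^{*}$ is a \emph{global} minimizer of $L$,'' and the sentence ``one argues that the only configuration consistent with both subproblems at the limit is the global minimizer'' is not an argument. For $0<p<1$ the objective is genuinely non-convex and blockwise optimality does not imply global optimality. The paper closes this step in one line by asserting that $L$ is quasi-convex because $\|z\|_{p}^{p}$ is quasi-convex, so that the stationary point is the global minimum. Be aware that this assertion is the entire content of the final claim, and that it is itself fragile: $\sum_{i}|z_{i}|^{p}$ is not quasi-convex in dimension at least two (take $p=1/2$ and the points $(1,0)$ and $(0,1)$, whose midpoint has the larger value $\sqrt{2}$, so the sublevel set at level $1$ is not convex), and even genuine quasi-convexity would not convert a limiting stationary point into a global minimizer. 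So your instinct that this is the main obstacle is correct; neither your sketch nor the paper's appeal to quasi-convexity resolves it, and as written your proposal proves only stationarity, not the theorem's stated conclusion.
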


    Details of the proof can be found in the supplementary material.

    \begin{table*}[t]
    \caption{Comparison of results between different algorithms on the AAPM and Covid-19 datasets (PSNR/SSIM). The best value of the traditional algorithm is marked with \textbf{bold}. The best value for deep learning is marked with {\color{red}red}. }
    \addtolength{\tabcolsep}{-4pt}
    \renewcommand\arraystretch{0.85}
    \scriptsize
    \begin{center}
    {\begin{tabular}{|l|c|c|c|c|c|c|c|c|c|c|c|} 
        \hline
        Data & \multicolumn{11}{c|}{AAPM Dataset}  \\ \hline
        Method & \multicolumn{6}{c|}{Traditional algorithm }  & \multicolumn{5}{c|}{Deep learning algorithm}  \\ \hline
        Method & FBP & BM3D & HQS-CG & HQS$_p$-CG & PWLS-CSCGR & IHQS$_p$-CG & PDNet & FBPNet & MetaInvNet & IHQS$_p$-Net* & IHQS$_p$-Net \\ \hline

        Noise & \multicolumn{10}{c|}{$\sigma=0.3$}\\\hline

        180 & 23.64/0.5136 & 24.82/0.6839 & 31.65/0.8047 & 32.10/0.8131 & 26.72/24.7975 & \textbf{32.16}/\textbf{0.8138} & 30.06/0.8124 & 31.73/0.8138 & 33.01/0.8348 & 32.70/0.8297 & {\color{red} 33.18}/{\color{red} 0.8415} \\ 
        120 & 20.69/0.4104 & 23.31/0.6454 & 30.11/0.7659 & 30.72/0.7841 & 26.97/23.9243 & \textbf{30.86}/\textbf{0.7865} & 28.89/0.7813 & 30.36/0.7811 & {\color{red} 32.17}/0.8103 & 31.45/0.7999 & 32.08/{\color{red} 0.8148} \\ 
        90 & 20.11/0.3494 & 23.55/0.6143 & 28.83/0.7324 & 29.59/0.7602 & 25.76/22.8060 & \textbf{29.81}/\textbf{0.7647} & 28.26/0.7633 & 29.60/0.7674 & 31.56/0.7939 & 30.47/0.7777 & {\color{red} 31.66}/{\color{red} 0.8032} \\ 
        60 & 18.21/0.2762 & 22.34/0.5580 & 26.84/0.6809 & 27.79/0.7226 & 23.94/21.0367 & \textbf{28.06}/\textbf{0.7298} & 26.61/0.7298 & 28.06/0.7397 & {\color{red} 30.38}/0.7627 & 28.89/0.7399 & 30.24/{\color{red} 0.7671} \\ 
           \hline

        Noise & \multicolumn{11}{c|}{$\sigma=0.3$ and $I_0=5\times 10^{5}$}\\\hline

        180 & 22.77/0.4397 & 24.82/0.6830 & 30.57/0.7641 & 30.91/0.7845 & 25.98/22.0973 & \textbf{30.92}/\textbf{0.7858} 
            & 29.54/0.7922 & 31.38/0.7990 & 32.02/0.8080 & 31.53/0.8005 & {\color{red} 32.25}/{\color{red} 0.8157} \\ 
        120 & 20.04/0.3494 & 23.31/0.6443 & 29.08/0.7179 & 29.81/0.7595 & 22.45/14.5781 & \textbf{29.90}/\textbf{0.7619} & 28.17/0.7633 & 30.15/0.7750 & 31.33/0.7929 & 30.48/0.7743 & {\color{red} 31.40}/{\color{red} 0.7961} \\ 
        90 & 19.37/0.2968 & 23.55/0.6132 & 27.91/0.6832 & 28.78/0.7369 & 22.94/15.7751 & \textbf{28.97}/\textbf{0.7430} & 27.16/0.7363 & 29.19/0.7570 & 30.76/0.7674 & 29.71/0.7516 & {\color{red} 30.98}/{\color{red} 0.7815} \\ 
        60 & 17.55/0.2341 & 22.37/0.5575 & 26.18/0.6359 & 27.09/0.6961 & 22.63/16.3278 & \textbf{27.30}/\textbf{0.7049} & 25.51/0.7077 & 28.13/0.7332 & 29.66/0.7424 & 28.62/0.7213 & {\color{red} 30.08}/{\color{red} 0.7597} \\ 
        \hline

        Data & \multicolumn{11}{c|}{Covid-19 Dataset}  \\ \hline
        Method & FBP & BM3D & HQS-CG & HQS$_p$-CG & PWLS-CSCGR & IHQS$_p$-CG & PDNet & FBPNet & MetaInvNet & IHQS$_p$-Net* & IHQS$_p$-Net \\ \hline

        Noise & \multicolumn{10}{c|}{$\sigma=0.3$}\\\hline

        180 & 23.91/0.5361 & 25.46/0.7304 & 33.32/0.8393 & 33.93/0.8496 & 28.50/20.5285 & \textbf{34.05}/\textbf{0.8515} & 30.71/0.8413 & 33.21/0.8512 & 34.76/0.8638 & 34.43/0.8584 & {\color{red} 35.03}/{\color{red} 0.8704} \\ 
        120 & 21.07/0.4303 & 24.19/0.6906 & 31.56/0.8041 & 32.29/0.8241 & 28.25/19.7033 & \textbf{32.64}/\textbf{0.8285} & 29.61/0.8140 & 31.66/0.8196 & {\color{red} 34.06}/{\color{red} 0.8523} & 32.95/0.8290 & 33.83/0.8489 \\ 
        90 & 20.50/0.3665 & 24.11/0.6585 & 29.91/0.7707 & 30.83/0.8015 & 27.01/19.1825 & \textbf{31.29}/\textbf{0.8078} & 29.12/0.7980 & 30.81/0.8074 & 33.25/0.8380 & 31.93/0.8120 & {\color{red} 33.52}/{\color{red} 0.8432} \\ 
        60 & 18.47/0.2897 & 22.26/0.5966 & 27.30/0.7179 & 28.36/0.7621 & 24.98/17.8874 & \textbf{28.94}/\textbf{0.7731} & 27.34/0.7651 & 28.36/0.7740 & {\color{red} 31.88}/{\color{red} 0.8150} & 30.12/0.7729 & 31.81/0.8147 \\  
        \hline
        
        Noise & \multicolumn{11}{c|}{$\sigma=0.3$ and $I_0=5\times 10^{5}$}\\ \hline
        
        180 & 22.96/0.4426 & 25.45/0.7294 & 31.81/0.7919 & 32.30/0.8163 & 27.44/17.9762 & \textbf{32.34}/\textbf{0.8188} & 30.13/0.8199 & 32.84/0.8367 & 33.69/0.8452 & 32.75/0.8250 & {\color{red} 33.98}/{\color{red} 0.8482} \\ 
        120 & 20.32/0.3515 & 24.18/0.6893 & 30.17/0.7480 & 31.10/0.7967 & 22.97/11.6984 & \textbf{31.33}/\textbf{0.8009} & 29.35/0.7979 & 31.45/0.8154 & 32.87/0.8302 & 31.65/0.7993 & {\color{red} 33.06}/{\color{red} 0.8326} \\ 
        90 & 19.65/0.2980 & 24.10/0.6572 & 28.78/0.7147 & 29.81/0.7748 & 23.63/13.0201 & \textbf{30.21}/\textbf{0.7832} & 27.78/0.7682 & 30.45/0.7979 & 32.18/0.8150 & 30.79/0.7787 & {\color{red} 32.63}/{\color{red} 0.8239} \\ 
        60 & 17.71/0.2340 & 22.27/0.5956 & 26.58/0.6674 & 27.60/0.7328 & 23.41/13.8117 & \textbf{28.05}/\textbf{0.7444} & 26.25/0.7414 & 28.80/0.7729 & 30.80/0.7897 & 29.39/0.7493 & {\color{red} 31.62}/{\color{red} 0.8048} \\  
        \hline

        Cpu (s) & 0.99 & 3.45 &  18.63 & 34.72 & --      & 26.24 & 2.03 & 5.18 &  6.76 & 2.08 & 6.98 \\
        Gpu (s) & --   & --   & 8.86   & 17.63 & 1530.59 & 15.91 & 0.62 & 1.35 &  1.92 & 1.98 & 2.29 \\ 
        \hline
    
    \end{tabular}}
    \end{center}
    \label{Covid}
    \end{table*}

    \section{Conjugate gradient initialization and deep unrolling networks}
    \label{sec3}

    The algorithm, as per Algorithm \ref{alg1}, is divided into two subproblems: z-subproblem for denoising and u-subproblem for CT image reconstruction. The u-subproblem solution uses the conjugate gradient method, known for its super-linear convergence rate. The convergence speed is tied to the spectral distribution of the coefficient matrix, with faster convergence when eigenvalues are concentrated and the condition number is smaller \cite{Smyl2021An,Savanier2022Unmatched}. A well-defined initial value can lead to finite iterations for convergence. To facilitate this, we introduce a deep learning network to adjust the input $\bar{u}^{k}$ at each iteration:
    \begin{equation*}
    	\hat{u}^{k} = \bar{u}^{k} + {\rm Net}(\bar{u}^{k}; \theta),
    	\label{3.1}
    \end{equation*}
    where $\theta$ represents the learnable network parameter, and $\hat{u}^{k}$ is the corrected result. The algorithm’s structure can incorporate any network structure, with UNet and lightweight DnCNN as examples, referred to as IHQS$_p$-Net and IHQS$_p$-Net*, respectively. A detailed description is shown in Fig.   \ref{net}. The parameter $\theta$ is trained through a loss function:
    \begin{equation*}
    	\mathcal{L}(\theta) = \frac{1}{K} \sum_{k=1}^{K} (\mu_{1}\mathcal{L}_{2}(\hat{u}^{k}, f) + \mu_{2}\mathcal{L}_{{\rm SSIM}}(\hat{u}^{k}, f)),
    	\label{3.2}
    \end{equation*}
    where $\mu_1$ and $\mu_2$ are weight factors, we set $\mu_1 = \mu_2 = 1$. $\mathcal{L}_{2}$ is the $L_{2}$ loss, i.e. $\mathcal{L}_{2}(x,y) = \sum \|x-y\|^2$. $\mathcal{L}_{SSIM}$ is the structural similarity loss, i.e. $\mathcal{L}_{SSIM}(x,y) = 1 - \mathrm{SSIM}(x,y)$ \cite{04ssim}.

    From an algorithmic view, the deep network is embedded in the splitting algorithm, serving as an initializer for CG solving without affecting convergence. From a deep learning view, the splitting algorithm framework is a deep unrolling network with theoretical guarantees. The integration of both enhances performance while ensuring theoretical robustness.

	\section{Numerical experiments}
    \label{sec4}
    
	\subsection{Datasets and experimental settings}

   The deep learning algorithm used training and validation data from the “2016 NIH-AAPM-Mayo Clinic Low Dose CT Grand Challenge” dataset \cite{M16aapm}. The training set had 1596 slices from five patients, and the validation set had 287 slices from one patient. Experiments were uniformly performed using a fan beam geometry with 800 detector elements. The test data included 38 slices from new patients in the AAPM dataset and 30 slices from the Covid-19 dataset \cite{SW20COVID} to assess the algorithm’s generalization and robustness.

    Our proposed algorithms were compared with eight others, including five traditional (FBP \cite{KA02fbp}, BM3D \cite{DF07BM3D}, HQS-CG \cite{ZL20metainv}, combining compressed sensing and sparse convolutional coding for PWLS-CSCGR \cite{Bao2019Convolutional}, HQS$_p$-CG without inertial $L_p$ norms) and three deep learning algorithms (PDNet \cite{AO18PDnet}, FBPNet \cite{JM17fbpconv}, MetaInvNet \cite{ZL20metainv}). For our lightweight IHQS$_p$-Net* and IHQS$_p$-Net, we set K = 6 and limited the CG algorithm iterations to 7. Data were degraded with different angles (60, 90, 120, 180) and two noise levels (Gaussian noise $\sigma=0.3$, mixed Gaussian and Poisson noise with intensity $I_0=5\times 10^{5}$).

	Our models were implemented using the PyTorch framework. The experiments were conducted on PyTorch 1.3.1 backend with an NVIDIA Tesla V100S GPU, 8-core CPU and 32GB RAM. For IHQS$_p$-CG, the hyperparameters were set to $p=0.7, \alpha=0.5$, $\beta=0.6$, $\lambda \in (0.0004,0.0008)$, $\gamma \in (0.1,0.4)$. For IHQS$_p$-Net, the hyperparameters were set to $p=0.7, \alpha=0.3$, $\beta=0.1$, $\lambda \in (0.001,0.004)$, $\gamma \in (0.01, 0.06)$.

    \begin{figure*}[!t]
        \begin{center}
        \scriptsize
        \addtolength{\tabcolsep}{-5pt}
        \renewcommand\arraystretch{0.5}
        {\fontsize{9pt}{\baselineskip}\selectfont
        \begin{tabular}{cccccc}
            \includegraphics[width=0.14\linewidth]{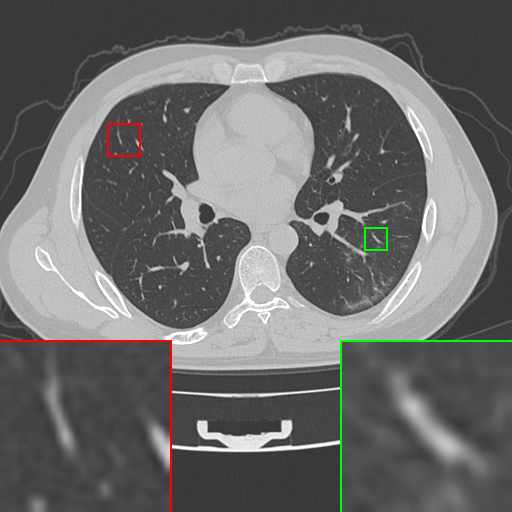} &
            \includegraphics[width=0.14\linewidth]{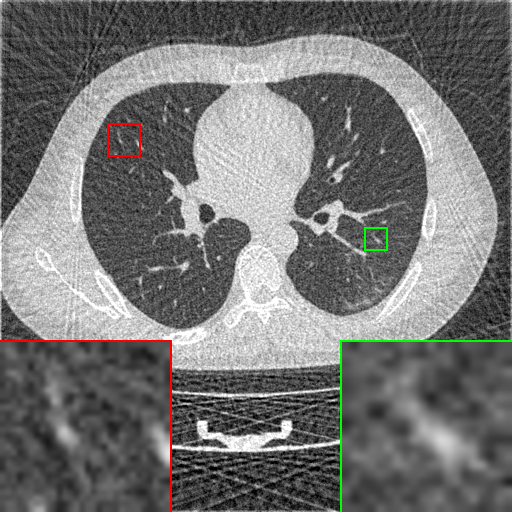} &
            \includegraphics[width=0.14\linewidth]{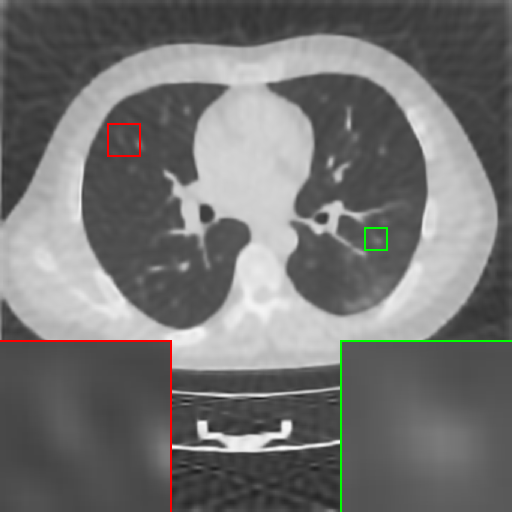} &
            \includegraphics[width=0.14\linewidth]{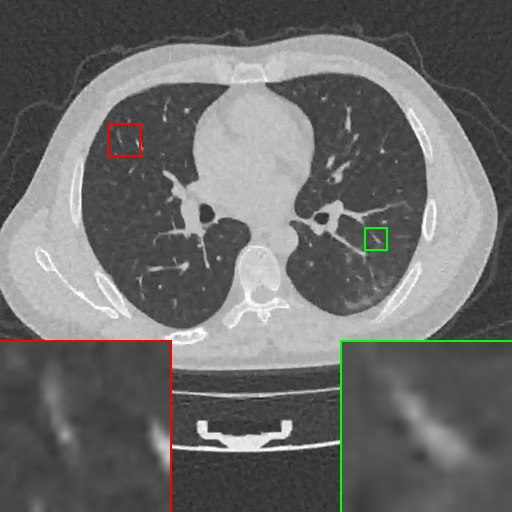} &
            \includegraphics[width=0.14\linewidth]{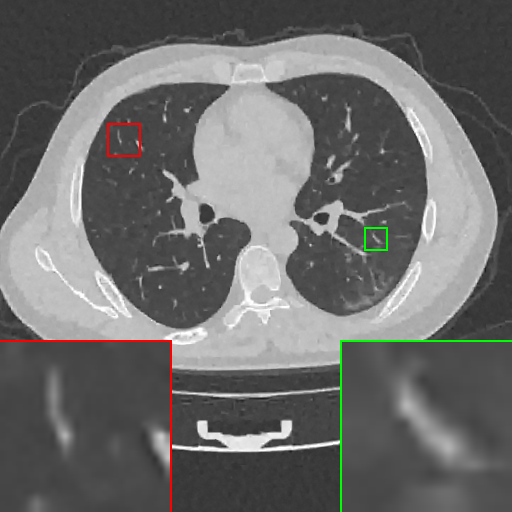} & 
            \includegraphics[width=0.14\linewidth]{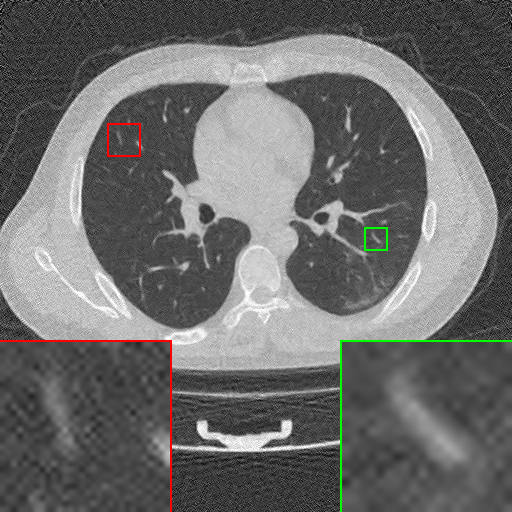} \\
            
            (a) Ground truth & (b) FBP & (c) BM3D & (d) HQS-CG & (e) HQS$_p$-CG & (f) PWLS-CSCGR  \\
    
            \includegraphics[width=0.14\linewidth]{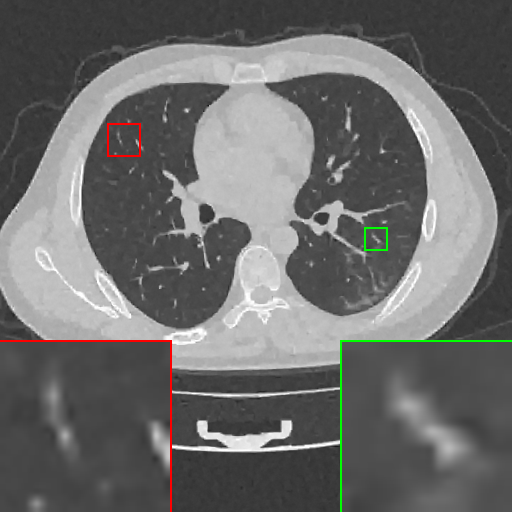} &
            \includegraphics[width=0.14\linewidth]{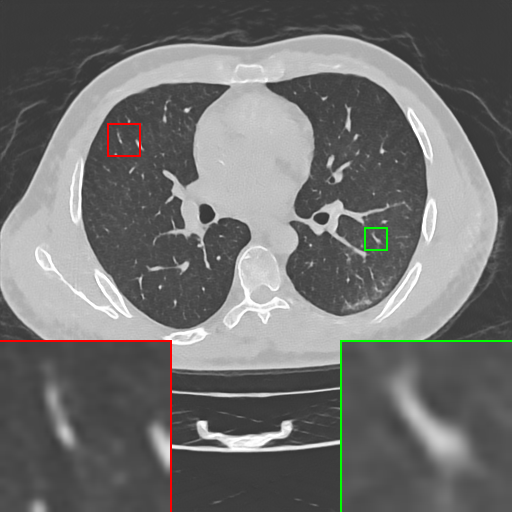} &
            \includegraphics[width=0.14\linewidth]{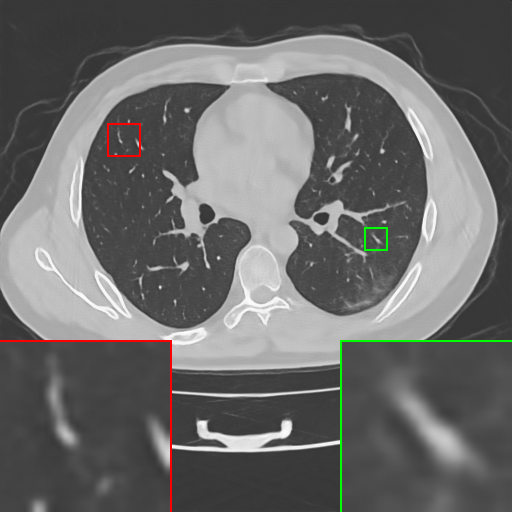} &
            \includegraphics[width=0.14\linewidth]{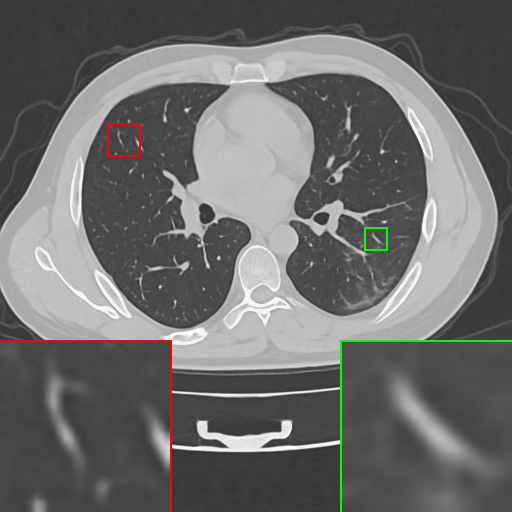} &
            \includegraphics[width=0.14\linewidth]{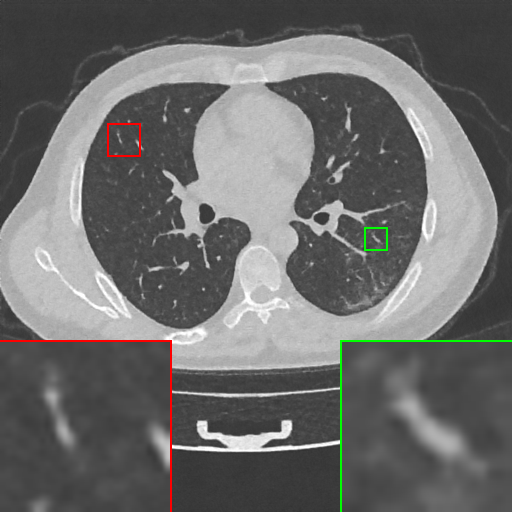} &
            \includegraphics[width=0.14\linewidth]{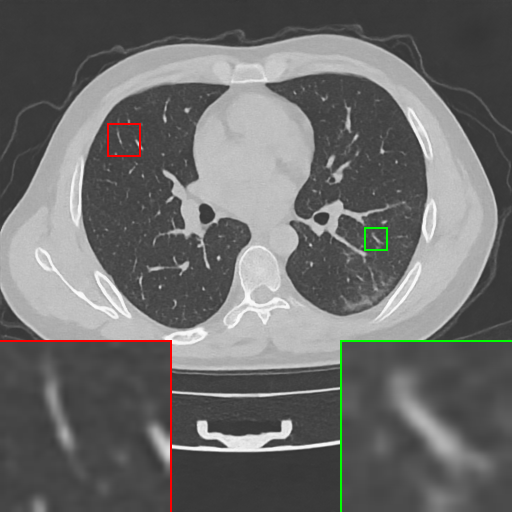} \\
            
           (g) IHQS$_p$-CG & (h) PDNet & (i) FBPNet & (j) MetaInvNet & (k) IHQS$_p$-Net* & (l) IHQS$_p$-Net \\
        \end{tabular}
        }
        \end{center}
        \caption{120 sparse views CT image reconstruction results and magnified ROIs from Covid-19. The sinogram is corrupted by $\sigma=0.3$ and $I_0=5\times 10^{5}$. 
        }
        \label{reconstruction1}
    \end{figure*}

    \begin{figure}[!t]
        \begin{center}
        \scriptsize
        \addtolength{\tabcolsep}{-5pt}
        \renewcommand\arraystretch{0.5}
        {\fontsize{10pt}{\baselineskip}\selectfont
        \begin{tabular}{cc}
        \includegraphics[width=0.42\linewidth, trim={23 1 40 20}, clip]{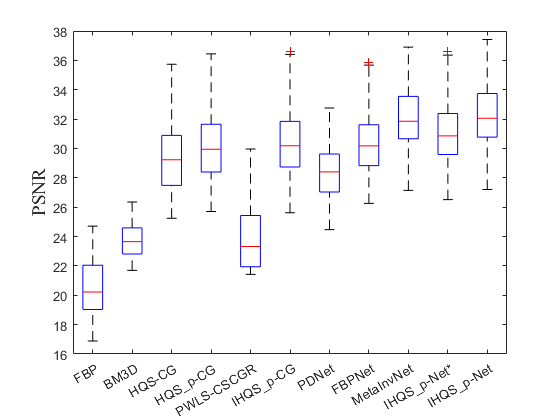} &
        \includegraphics[width=0.42\linewidth, trim={23 1 40 20}, clip]{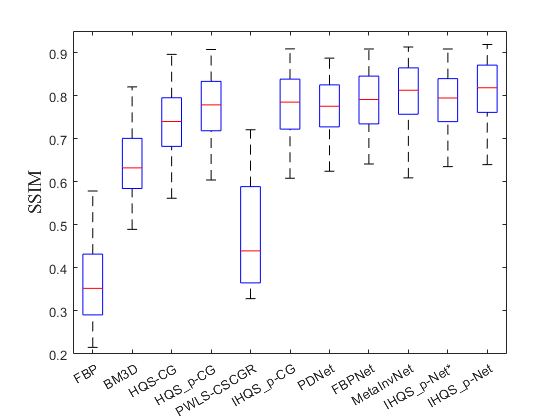} \\
        \end{tabular}
        }
        \end{center}
        \caption{Robustness analysis of all compared algorithms.}
        \label{box}
    \end{figure}

	\subsection{Comparison of quantitative and qualitative results}

    Evaluation metrics used are peak signal-to-noise ratio (PSNR) \cite{10psnr} and structural similarity (SSIM) \cite{04ssim}. For a comprehensive evaluation of the algorithm performance, we added a comparison of the mean absolute error (MAE) and the root mean square error (RMSE) in the supplementary material. Table \ref{Covid} shows that our proposed IHQS$_p$-CG algorithm outperforms other traditional algorithms, even surpassing the deep learning algorithms PDNet and FBPNet on both AAPM and Covid-19 datasets. It shows greater robustness to noise and missing range than the HQS-CG algorithm, with an average improvement of more than 0.5 dB for 180-degree and 120-degree observations, and over 1 dB for 90-degree and 60-degree observations.

    Our proposed IHQS$_p$-Net outperforms other deep learning methods, especially in handling mixed noises. Our lightweight IHQS$_p$-Net* significantly improves upon the traditional IHQS$_p$-CG, indicating a lightweight CG initializer can enhance performance and convergence speed. Compared to FBPNet, IHQS$_p$-Net shows an average improvement of 2 dB, highlighting the benefits of integrating traditional algorithms with theoretical guarantees into deep networks. Meanwhile, Table \ref{Covid} gives the running time for each algorithm to reconstruct a 512$\times$512 observation of 90 degrees.

    Fig.  \ref{reconstruction1} compares reconstructed images. FBP, BM3D, and HQS-CG show artifacts and missing structures due to lack of observation angle. Our IHQS$_p$-CG yields smoother results. Deep learning algorithms like PDNet and FBPNet produce smooth but over-smoothed results, blurring structural information. Both MetaInvNet and our IHQS$_p$-Net produce pleasing results, but IHQS$_p$-Net captures more details and avoids pseudo-connections seen in MetaInvNet. Notably, IHQS$_p$-Net* visually outperforms IHQS$_p$-CG, highlighting the importance of the CG initializer. The robustness analysis of all the algorithms is given in Fig. 3. By statistically analyzing the reconstructed images for all noise cases, observation angle cases and datasets, it can be found that the proposed IHQS$_p$-CG and IHQS$_p$-Net are robust to noise, observation angle, and data.

    \begin{figure}[!t]
    \begin{center}
    \footnotesize
    \addtolength{\tabcolsep}{-5pt}
    \renewcommand\arraystretch{0.5}
    {\fontsize{9pt}{\baselineskip}\selectfont
    \begin{tabular}{cc}
    \includegraphics[width=0.44\linewidth, trim={17 5 3 20}, clip]{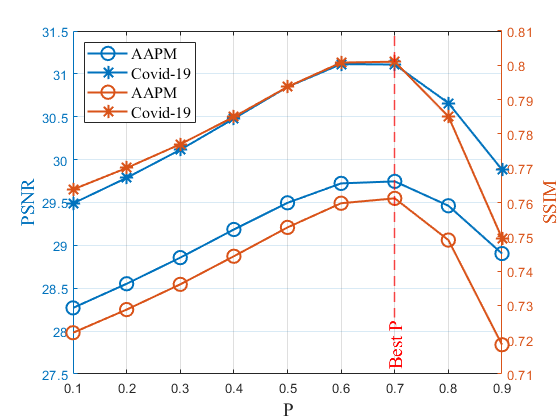} &
    \includegraphics[width=0.44\linewidth, trim={0 1 7 20}, clip]{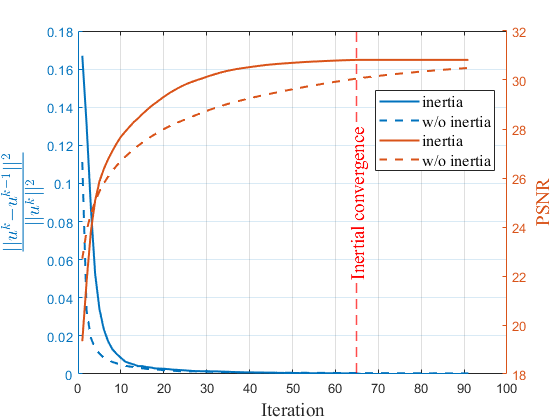} \\
    (a) $p$ & (b) Convergence
    \end{tabular}
    }
    \end{center}
    \caption{(a) Regarding the ablation experiment of $p$-value. (b) The impact of inertia step on algorithm convergence.}
    \label{p}
    \end{figure}

	\subsection{Ablation experiment}

    There is a key parameter $p$ in IHQS$_p$ which controls the sparsity of the regular term. Different values of $p$ affect the reconstruction results of CT images \cite{zhang2018resolution}. Therefore, we need to discuss the different $p$-value effects. Fig. \ref{p}(a) shows the results of different $p$ values on two datasets, AAPM and Covid-19. From the figure, we can find that between 1 and 0.7, the reconstruction effect is enhanced as $p$ decreases. This reflects the advantage of the $L_p$ paradigm over $L_1$. However, lower$ p$ is not better. As can be seen in the figure, for IHQS$_p$, the optimal $p = 0.7$.

    One of the contributions of IHQS$_p$ is the introduction of inertial steps to accelerate the convergence of HQS. HQS$_p$-CG, in Table \ref{Covid}, shows the numerical results without inertia step. It can be found that the introduction of the inertia step can effectively improve the algorithm's performance, and the improvement becomes more obvious as the observation angle becomes sparser.  Fig. \ref{p}(b) gives a comparison between IHQS$_p$ and HQS$_p$-CG for iterative reconstruction when the 90-degree observation is corrupted by mixed Gaussian Poisson noise. PSNR and error $\frac{|| u^k - u^{k-1}||^2}{|| u^k ||^2}$ are used as metrics. From Fig. \ref{p}(b), IHQS$_p$ stops iterating at step 65, while HQS$_p$ without inertial steps iterates until step 90. Meanwhile, HQS$_p$ with more iterations does not outperform IHQS$_p$ in PSNR. This reflects the importance of introducing inertial steps.

    \section{Conclusion}
    \label{sec5}

    We propose an inertial $L_p$-norm half-quadratic splitting algorithm for sparse view CT image reconstruction and establish its convergence. Building on IHQS$_p$, we introduce IHQS$_p$-Net, a deep unrolling network that comes with theoretical guarantees. Both quantitative and qualitative comparisons with other methods demonstrate that our proposed IHQS$_p$-CG and IHQS$_p$-Net outperform in terms of performance and robustness, particularly excelling in scenarios with fewer scanned views and complex noise situations.

\newpage

\bibliographystyle{IEEEtran}
\IEEEtriggeratref{19}
\bibliography{IEEEabrv,reference}

\begin{thebibliography}{10}
\providecommand{\url}[1]{#1}
\csname url@samestyle\endcsname
\providecommand{\newblock}{\relax}
\providecommand{\bibinfo}[2]{#2}
\providecommand{\BIBentrySTDinterwordspacing}{\spaceskip=0pt\relax}
\providecommand{\BIBentryALTinterwordstretchfactor}{4}
\providecommand{\BIBentryALTinterwordspacing}{\spaceskip=\fontdimen2\font plus
\BIBentryALTinterwordstretchfactor\fontdimen3\font minus
  \fontdimen4\font\relax}
\providecommand{\BIBforeignlanguage}[2]{{%
\expandafter\ifx\csname l@#1\endcsname\relax
\typeout{** WARNING: IEEEtran.bst: No hyphenation pattern has been}%
\typeout{** loaded for the language `#1'. Using the pattern for}%
\typeout{** the default language instead.}%
\else
\language=\csname l@#1\endcsname
\fi
#2}}
\providecommand{\BIBdecl}{\relax}
\BIBdecl

\bibitem{BS10}
J.~Bian, J.~H. Siewerdsen, X.~Han, E.~Y. Sidky, J.~L. Prince, C.~A. Pelizzari,
  and X.~Pan, ``Evaluation of sparse-view reconstruction from
  flat-panel-detector cone-beam ct,'' \emph{Phys. Med. Biol.}, vol.~55, no.~22,
  p. 6575, 2010.

\bibitem{GB70art}
R.~Gordon, R.~Bender, and G.~T. Herman, ``Algebraic reconstruction techniques
  (art) for three-dimensional electron microscopy and x-ray photography,''
  \emph{J. Theor. Biol.}, vol.~29, no.~3, pp. 471--481, 1970.

\bibitem{KA02fbp}
A.~Katsevich, ``Theoretically exact filtered backprojection-type inversion
  algorithm for spiral ct,'' \emph{SIAM J. Appl. Math.}, vol.~62, no.~6, pp.
  2012--2026, 2002.

\bibitem{Yuchuan2005Relation}
Y.~Wei, G.~Wang, and J.~Hsieh, ``Relation between the filtered backprojection
  algorithm and the backprojection algorithm in ct,'' \emph{IEEE Signal
  Process. Lett.}, vol.~12, no.~9, pp. 633--636, 2005.

\bibitem{DF07BM3D}
K.~Dabov, A.~Foi, V.~Katkovnik, and K.~Egiazarian, ``Image denoising by sparse
  3-d transform-domain collaborative filtering,'' \emph{IEEE Trans. Image
  Process.}, vol.~16, no.~8, pp. 2080--2095, 2007.

\bibitem{zhao2019ultra}
T.~Zhao, J.~Hoffman, M.~McNitt-Gray, and D.~Ruan, ``Ultra-low-dose ct image
  denoising using modified bm3d scheme tailored to data statistics,''
  \emph{Med. Phys.}, vol.~46, no.~1, pp. 190--198, 2019.

\bibitem{guo2021fast}
Y.~Guo, A.~Davy, G.~Facciolo, J.-M. Morel, and Q.~Jin, ``Fast, nonlocal and
  neural: a lightweight high quality solution to image denoising,'' \emph{IEEE
  Signal Process. Lett.}, vol.~28, pp. 1515--1519, 2021.

\bibitem{Mahmood2018Adaptive}
F.~Mahmood, N.~Shahid, U.~Skoglund, and P.~Vandergheynst, ``Adaptive
  graph-based total variation for tomographic reconstructions,'' \emph{IEEE
  Signal Process. Lett.}, vol.~25, no.~5, pp. 700--704, 2018.

\bibitem{zhang2018resolution}
L.~Zhang, H.~Zhao, W.~Ma, J.~Jiang, L.~Zhang, J.~Li, F.~Gao, and Z.~Zhou,
  ``Resolution and noise performance of sparse view x-ray ct reconstruction via
  lp-norm regularization,'' \emph{Phys. Medica}, vol.~52, pp. 72--80, 2018.

\bibitem{Bao2019Convolutional}
P.~Bao, W.~Xia, K.~Yang, W.~Chen, M.~Chen, Y.~Xi, S.~Niu, J.~Zhou, H.~Zhang,
  H.~Sun, Z.~Wang, and Y.~Zhang, ``Convolutional sparse coding for compressed
  sensing ct reconstruction,'' \emph{IEEE Trans. Med. Imaging}, vol.~38,
  no.~11, pp. 2607--2619, 2019.

\bibitem{kim2014sparse}
K.~Kim, J.~C. Ye, W.~Worstell, J.~Ouyang, Y.~Rakvongthai, G.~El~Fakhri, and
  Q.~Li, ``Sparse-view spectral ct reconstruction using spectral patch-based
  low-rank penalty,'' \emph{IEEE Trans. Med. Imaging}, vol.~34, no.~3, pp.
  748--760, 2014.

\bibitem{DL13JSRmodel}
B.~Dong, J.~Li, and Z.~Shen, ``X-ray ct image reconstruction via wavelet frame
  based regularization and radon domain inpainting,'' \emph{J. Sci. Comput.},
  vol.~54, no.~2, pp. 333--349, 2013.

\bibitem{ZL20metainv}
H.~Zhang, B.~Liu, H.~Yu, and B.~Dong, ``Metainv-net: meta inversion network for
  sparse view ct image reconstruction,'' \emph{IEEE Trans. Med. Imaging},
  vol.~40, no.~2, pp. 621--634, 2020.

\bibitem{Sakhaee2017Joint}
E.~Sakhaee and A.~Entezari, ``Joint inverse problems for signal reconstruction
  via dictionary splitting,'' \emph{IEEE Signal Process. Lett.}, vol.~24,
  no.~8, pp. 1203--1207, 2017.

\bibitem{JM17fbpconv}
K.~H. Jin, M.~T. McCann, E.~Froustey, and M.~Unser, ``Deep convolutional neural
  network for inverse problems in imaging,'' \emph{IEEE Trans. Image Process.},
  vol.~26, no.~9, pp. 4509--4522, 2017.

\bibitem{Du2019Visual}
W.~Du, H.~Chen, P.~Liao, H.~Yang, G.~Wang, and Y.~Zhang, ``Visual attention
  network for low-dose ct,'' \emph{IEEE Signal Process. Lett.}, vol.~26, no.~8,
  pp. 1152--1156, 2019.

\bibitem{AO18PDnet}
J.~Adler and O.~{\"O}ktem, ``Learned primal-dual reconstruction,'' \emph{IEEE
  Trans. Med. Imaging}, vol.~37, no.~6, pp. 1322--1332, 2018.

\bibitem{shi2024mud}
B.~Shi, K.~Jiang, S.~Zhang, Q.~Lian, Y.~Qin, and Y.~Zhao, ``Mud-net:
  multi-domain deep unrolling network for simultaneous sparse-view and metal
  artifact reduction in computed tomography,'' \emph{Machine Learning: Science
  and Technology}, vol.~5, no.~1, p. 015010, 2024.

\bibitem{Pan2024Iterative}
J.~Pan, H.~Yu, Z.~Gao, S.~Wang, H.~Zhang, and W.~Wu, ``Iterative residual
  optimization network for limited-angle tomographic reconstruction,''
  \emph{IEEE Trans. Image Process.}, vol.~33, pp. 910--925, 2024.

\bibitem{Wu2021DRONE}
W.~Wu, D.~Hu, C.~Niu, H.~Yu, V.~Vardhanabhuti, and G.~Wang, ``Drone:
  Dual-domain residual-based optimization network for sparse-view ct
  reconstruction,'' \emph{IEEE Trans. Med. Imaging}, vol.~40, no.~11, pp.
  3002--3014, 2021.

\bibitem{Shi2024Coupling}
B.~Shi, S.~Zhang, K.~Jiang, and Q.~Lian, ``Coupling model- and data-driven
  networks for ct metal artifact reduction,'' \emph{IEEE Trans. Comput.
  Imaging}, vol.~10, pp. 415--428, 2024.

\bibitem{Wu2023Deep}
W.~Wu, X.~Guo, Y.~Chen, S.~Wang, and J.~Chen, ``Deep
  embedding-attention-refinement for sparse-view ct reconstruction,''
  \emph{IEEE Trans. Instrum. Meas.}, vol.~72, pp. 1--11, 2023.

\bibitem{Wu2024Wavelet}
W.~Wu, Y.~Wang, Q.~Liu, G.~Wang, and J.~Zhang, ``Wavelet-improved score-based
  generative model for medical imaging,'' \emph{IEEE Trans. Med. Imaging},
  vol.~43, no.~3, pp. 966--979, 2024.

\bibitem{Xu2024Stage}
K.~Xu, S.~Lu, B.~Huang, W.~Wu, and Q.~Liu, ``Stage-by-stage wavelet
  optimization refinement diffusion model for sparse-view ct reconstruction,''
  \emph{IEEE Trans. Med. Imaging}, pp. 1--1, 2024.

\bibitem{Wu2024Multi}
W.~Wu, J.~Pan, Y.~Wang, S.~Wang, and J.~Zhang, ``Multi-channel optimization
  generative model for stable ultra-sparse-view ct reconstruction,'' \emph{IEEE
  Trans. Med. Imaging}, pp. 1--1, 2024.

\bibitem{10570449}
Y.~Wang, Z.~Li, and W.~Wu, ``Time-reversion fast-sampling score-based model for
  limited-angle ct reconstruction,'' \emph{IEEE Trans. Med. Imaging}, pp. 1--1,
  2024.

\bibitem{10577271}
Z.~Li, D.~Chang, Z.~Zhang, F.~Luo, Q.~Liu, J.~Zhang, G.~Yang, and W.~Wu,
  ``Dual-domain collaborative diffusion sampling for multi-source stationary
  computed tomography reconstruction,'' \emph{IEEE Trans. Med. Imaging}, pp.
  1--1, 2024.

\bibitem{RS97operator}
A.~Ron and Z.~Shen, ``Affine systems inl2 (rd): the analysis of the analysis
  operator,'' \emph{J. Funct. Anal.}, vol. 148, no.~2, pp. 408--447, 1997.

\bibitem{alvarez2001inertial}
F.~Alvarez and H.~Attouch, ``An inertial proximal method for maximal monotone
  operators via discretization of a nonlinear oscillator with damping,''
  \emph{Set-Valued Anal.}, vol.~9, pp. 3--11, 2001.

\bibitem{ochs2014ipiano}
P.~Ochs, Y.~Chen, T.~Brox, and T.~Pock, ``ipiano: Inertial proximal algorithm
  for nonconvex optimization,'' \emph{SIAM J. Imaging Sci.}, vol.~7, no.~2, pp.
  1388--1419, 2014.

\bibitem{MS12l_q}
G.~Marjanovic and V.~Solo, ``On $ l\_q $ optimization and matrix completion,''
  \emph{IEEE Trans. Signal Process.}, vol.~60, no.~11, pp. 5714--5724, 2012.

\bibitem{Smyl2021An}
D.~Smyl, T.~N. Tallman, D.~Liu, and A.~Hauptmann, ``An efficient quasi-newton
  method for nonlinear inverse problems via learned singular values,''
  \emph{IEEE Signal Process. Lett.}, vol.~28, pp. 748--752, 2021.

\bibitem{Savanier2022Unmatched}
M.~Savanier, E.~Chouzenoux, J.-C. Pesquet, and C.~Riddell, ``Unmatched
  preconditioning of the proximal gradient algorithm,'' \emph{IEEE Signal
  Process. Lett.}, vol.~29, pp. 1122--1126, 2022.

\bibitem{04ssim}
Z.~Wang, A.~C. Bovik, H.~R. Sheikh, and E.~P. Simoncelli, ``Image quality
  assessment: from error visibility to structural similarity,'' \emph{IEEE
  Trans. Image Process.}, vol.~13, no.~4, pp. 600--612, 2004.

\bibitem{M16aapm}
C.~McCollough, ``Tu-fg-207a-04: overview of the low dose ct grand challenge,''
  \emph{Med. Phys.}, vol.~43, no. 6Part35, pp. 3759--3760, 2016.

\bibitem{SW20COVID}
Y.~Shi, G.~Wang, X.-p. Cai, J.-w. Deng, L.~Zheng, H.-h. Zhu, M.~Zheng, B.~Yang,
  and Z.~Chen, ``An overview of covid-19,'' \emph{J. Zhejiang Univ.-SCI. B},
  vol.~21, no.~5, pp. 343--360, 2020.

\bibitem{10psnr}
A.~Hore and D.~Ziou, ``Image quality metrics: Psnr vs. ssim,'' in \emph{Proc.
  IEEE Conf. Comput. Vision Pattern Recognit. (CVPR)}.\hskip 1em plus 0.5em
  minus 0.4em\relax IEEE, 2010, pp. 2366--2369.

\end{thebibliography}

\newpage
\section{Convergence analysis}

Define $v^{k}=(z^{k},u^{k}),  \bar{v}^{k}=(\bar{z}^{k},\bar{u}^{k})$,
\begin{equation*}
	\begin{aligned}
		L(v^{k})=\frac{1}{2}\|Au^{k}-f\|^{2}_{2}+\lambda\|z^{k}\|^{p}_{p}+\frac{1}{2}\sum_{i=1}^{m}\gamma_{i}\|W_{i}u^{k}-z_{i}^{k}\|^{2}_{2}.
	\end{aligned}
\end{equation*}

\begin{lemma} 
	\label{lemma1}
	Suppose that the sequences $\left\{v^{k}\right\}$ and $\left\{\bar{v}^{k}\right\}$ generated via Algorithm 1, $0<\beta<\frac{\sqrt{5}-1}{2}$, then the sets $\left\{v^{k}\right\}$ and $\left\{\bar{v}^{k}\right\}$ are bounded.
\end{lemma}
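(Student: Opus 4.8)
# Proof Proposal for Lemma 1

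The plan is to establish a sufficient-decrease type inequality for a suitable Lyapunov/merit function built from $L$ together with the inertial correction terms, and then use coercivity of $L$ to conclude boundedness. First I would analyze each of the four update steps. For the $z$-subproblem \eqref{eq7}, since $z^{k+1}$ is the exact minimizer of $\lambda\|z\|_p^p + \tfrac12\sum_i\gamma_i\|W_i\bar u^k - z_i\|_2^2$, comparing its value at $z^{k+1}$ against $\bar z^k$ gives an inequality relating $\lambda\|z^{k+1}\|_p^p + \tfrac12\sum_i\gamma_i\|W_i\bar u^k - z_i^{k+1}\|^2$ to the same expression with $z^{k+1}$ replaced by $\bar z^k$. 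Similarly, for the $u$-subproblem \eqref{eq9}, $u^{k+1}$ exactly minimizes the strongly convex quadratic $Q(u) = \tfrac12\|Au-f\|^2 + \tfrac12\sum_i\gamma_i\|W_iu-\bar z_i^{k+1}\|^2$; here strong convexity (with modulus at least $\min_i\gamma_i$ since $W_i^TW_i = I$) yields $Q(u^{k+1}) + \tfrac{c}{2}\|u^{k+1}-u\|^2 \le Q(u)$ for all $u$, in particular for $u = \bar u^k$. The key point is that the CG step produces the \emph{exact} solution of \eqref{eq9} (up to the convergence of CG), so these minimization inequalities hold.

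Next I would handle the inertial steps \eqref{eq8} and \eqref{eq10}. Using the identity $\|a + \beta(a-b)\|^2 = (1+\beta)\|a\|^2 - \beta\|b\|^2 + \beta(1+\beta)\|a-b\|^2$ — or more simply the standard three-point expansions used in the iPiano analysis \cite{ochs2014ipiano} — one controls $\|u^{k+1}-\bar u^k\|$ and $\|z^{k+1}-\bar z^k\|$ in terms of differences of consecutive iterates. Combining the minimization inequalities from the two subproblems with these inertial identities, and telescoping, I expect to obtain that a quantity of the form
\begin{equation*}
E^{k} = L(v^{k}) + a\|u^{k}-\bar u^{k-1}\|^2 + b\|z^{k}-\bar z^{k-1}\|^2
\end{equation*}
is nonincreasing in $k$ for suitable constants $a,b>0$, provided the inertia parameters satisfy $\alpha\in[0,1)$ and $\beta\in[0,\tfrac{\sqrt5-1}{2})$ — the bound $\tfrac{\sqrt5-1}{2}$ being exactly the threshold that makes the golden-ratio-type quadratic in $\beta$ keep the coefficients $a,b$ positive after telescoping. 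Hence $L(v^k)\le E^k\le E^0$ for all $k$.

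Finally, since $L(v) = \tfrac12\|Au-f\|^2 + \lambda\|z\|_p^p + \tfrac12\sum_i\gamma_i\|W_iu-z_i\|^2 \ge 0$ and, crucially, is \emph{coercive} in $(u,z)$ — the term $\lambda\|z\|_p^p$ forces $\|z\|$ bounded on sublevel sets (note $\|z\|_p^p\to\infty$ as $\|z\|\to\infty$ for $0<p<1$ in finite dimension), and then $\tfrac12\sum_i\gamma_i\|W_iu-z_i\|^2$ together with $\sum_i W_i^TW_i = I$ (or at least injectivity of $W$) forces $\|u\|$ bounded — the uniform bound $L(v^k)\le E^0$ shows $\{v^k\}$ lies in a bounded sublevel set. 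Boundedness of $\{\bar v^k\}$ then follows from $\bar v^k = v^k + (\text{inertia})\cdot(v^k - \bar v^{k-1})$ together with the already-established boundedness of the inertial error terms $\|u^k-\bar u^{k-1}\|$, $\|z^k-\bar z^{k-1}\|$ (which are summable, hence bounded, from the telescoped inequality). The main obstacle I anticipate is twofold: first, verifying that the coefficients $a,b$ in the Lyapunov function stay strictly positive — this is where the precise parameter ranges for $\alpha,\beta$ must be used carefully, and where the $z$-inertia with $\alpha$ coupling into the $u$-subproblem through $\bar z^{k+1}$ complicates the bookkeeping; second, ensuring the argument is not disrupted by the fact that $\|z\|_p^p$ is nonconvex and nonsmooth — but since we only ever use the \emph{optimality} (global minimality) of the $z$-update, not any subgradient inequality, nonconvexity causes no difficulty here.
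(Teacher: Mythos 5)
Your strategy (a Lyapunov function built from $L$ plus inertial correction terms, followed by coercivity of $L$) is entirely different from the paper's argument, and its load-bearing step is missing. Everything rests on the claim that $E^{k}=L(v^{k})+a\|u^{k}-\bar u^{k-1}\|^{2}+b\|z^{k}-\bar z^{k-1}\|^{2}$ is nonincreasing for some $a,b>0$, and you only say you ``expect'' this. The difficulty is concrete: the minimization inequalities you extract from \eqref{eq7} and \eqref{eq9} are anchored at the extrapolated points $\bar u^{k}$ and $\bar z^{k+1}$, while $L(v^{k+1})$ couples $u^{k+1}$ with $z^{k+1}$; translating one into the other generates cross terms in $\|u^{k+1}-\bar u^{k}\|$ and $\|z^{k+1}-\bar z^{k}\|$ whose coefficients depend on $\gamma$, $\alpha$ and $\beta$, and there is no reason the resulting quadratic form stays nonnegative over the entire range $\alpha\in[0,1)$, $\beta\in[0,\tfrac{\sqrt{5}-1}{2})$. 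In iPiano-type analyses the admissible inertia is coupled to the step size or to a convexity modulus; your conditions contain no such coupling, so the positivity of $a$ and $b$ is not something to take on faith --- it is the whole theorem. Until that descent inequality is actually derived, boundedness does not follow. (Your coercivity observation for $L$ is correct in finite dimensions given $W_i^{T}W_i=I$, but it sits downstream of the missing step.)

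For comparison, the paper's own proof is far more elementary and never touches $L$. It takes $\{u^{k}\}$ to be bounded a priori (pixel intensities lie in $[0,255]$), reads boundedness of $\{\bar z^{k}\}$ off the normal equations of the $u$-subproblem, and then unrolls the recursion $\bar u^{k+1}=(1+\beta)u^{k+1}-\beta\bar u^{k}$ into a finite geometric-type sum; the hypothesis $0<\beta<\tfrac{\sqrt{5}-1}{2}$ enters only through the inequality $\beta(1+\beta)<1$, which keeps that sum uniformly bounded. Finally $(1+\alpha)z^{k+1}=\bar z^{k+1}+\alpha\bar z^{k}$ transfers boundedness from $\{\bar z^{k}\}$ to $\{z^{k}\}$. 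The golden-ratio bound on $\beta$ thus plays a purely algebraic role, not the role of keeping a Lyapunov coefficient positive. If you want to salvage your route, you must either carry out the sufficient-decrease estimate in full (and likely accept tighter conditions linking $\alpha,\beta$ to $\gamma$), or fall back on an a priori bound on $\{u^{k}\}$ as the paper does.
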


\begin{proof}[Proof]
	In our problem, $u^{k}$ represents the gray value of an image, i.e., $0\leq u^{k}_{i,j} \leq255$, therefore $\left\{u^{k}\right\}$ is bounded. From the $u^{k+1}$-update step (7), we have 
	\begin{equation*}
		u^{k+1}=(A^{T}A+\sum_{i=1}^{m}\gamma_{i} W_{i}^{T}W_{i})^{-1}(A^{T}f+\sum_{i=1}^{m}\gamma_{i}W_{i}^{T}\bar{z}_{i}^{k}),
	\end{equation*}
	which together with $W_{i}^{T}W_{i}=I$ and $W_{i}$ is the given wavelet transform operator, yields,
	\begin{equation*}
		\sum_{i=1}^{m}\gamma_{i}W_{i}^{T}\bar{z}_{i}^{k} = (A^{T}A+\sum_{i=1}^{m}\gamma_{i}I)u^{k+1} - A^{T}f.
	\end{equation*}
	From $\left\{u^{k}\right\}$ is bounded, it is easy to see that $\left\{\bar{z}^{k}\right\}$ is bounded.
	
	On the other hand, from (8), we find 
	\begin{equation}
		\begin{aligned}
			\bar{u}^{k+1} &= u^{k+1} + \beta(u^{k+1} - \bar{u}^k) = (1+\beta)u^{k+1} - \beta\bar{u}^k \\
			&= (1+\beta)u^{k+1} - \beta(u^{k} + \beta(u^{k} - \bar{u}^{k-1})) \\
			&= (1+\beta)u^{k+1} - \beta(1+\beta)u^{k} + \beta^2(1+\beta)u^{k-1} - \beta^3\bar{u}^{k-2} \\
			&=  \cdots  \cdots \\
			&= (1+\beta)u^{k+1} - \beta(1+\beta)u^{k}  +  \cdots + (-1)^{k}\beta^{k}(1+\beta)u^{1} \\
			&~~~+ (-1)^{k+1}\beta^{k+1}u^{0}.
		\end{aligned}
		\label{eq12}
	\end{equation}
	Since $\left\{u^{k}\right\}$ is bounded, there exists a constant $N>0$ such that $\| u^{k} \| \leq N$, $\forall k \geq 0$. If $0<\beta<\frac{\sqrt{5}-1}{2}$ holds, then $\beta(1+\beta)<1$. It follows from (\ref{eq12}) that
	\begin{equation*}
		\begin{aligned}
			\| \bar{u}^{k+1} \| &\leq (1+\beta)\| u^{k+1} \| + \beta(1+\beta)\|u^{k}\| + \cdots \\
			&~~~+ \beta^{k}(1+\beta)\|u^{1}\| + \beta^{k+1}\|u^{0}\| \\
			&< (1+\beta)N + N + \beta N +\beta^{2}N + \cdots + \beta^{k-1}N + \beta^{k}N \\
			&< (1+\beta)N + N(1+ \beta  +\beta^{2} + \cdots + \beta^{k-1} + \beta^{k}) \\
			&= (1+\beta)N + N\sum_{i=1}^{k}\beta^{i} \\
			&< (1+\beta)N + N\sum_{i=1}^{\infty}\beta^{i} = (1+\beta)N + \frac{N}{1-\beta} \\
			&=\frac{2-\beta^2}{1-\beta}N. 
		\end{aligned}
		\label{eq13}
	\end{equation*}
	which consequently results in the boundedness of sequence $\left\{u^{k}\right\}$. By (6), we have $(1+\alpha)z^{k+1} = \bar{z}^{k+1} + \alpha\bar{z}^{k}$. Thus, $\left\{z^{k}\right\}$ is bounded because $\left\{\bar{z}^{k}\right\}$ is bounded. Therefore, for $k \geq 0$, we get $\left\{v^{k}\right\}$ and $\left\{\bar{v}^{k}\right\}$ are bounded. 
\end{proof}

\begin{table*}[!t]
	\caption{Comparison of results between different algorithms on the AAPM and Covid-19 datasets (MAE/RMSE). The best value of the traditional algorithm is marked with \textbf{bold}. The best value for deep learning is marked with {\color{red}red}. }
	\addtolength{\tabcolsep}{-4pt}
	\renewcommand\arraystretch{1}
	\begin{center}
		{\begin{tabular}{|l|c|c|c|c|c|c|c|c|c|c|c|} 
				\hline
				Data & \multicolumn{11}{c|}{AAPM Dataset}  \\ \hline
				Method & \multicolumn{6}{c|}{Traditional algorithm }  & \multicolumn{5}{c|}{Deep learning algorithm}  \\ \hline
				Method & FBP & BM3D & HQS-CG & HQS$_p$-CG & PWLS-CSCGR & IHQS$_p$-CG & PDNet & FBPNet & MetaInvNet & IHQS$_p$-Net* & IHQS$_p$-Net \\ \hline
				
				Noise & \multicolumn{10}{c|}{$\sigma=0.3$}\\\hline
				
				180 & 13.18/16.79 & 10.61/14.66 & 4.73/6.74 & 4.51/6.41 & 7.98/453.28 & \textbf{4.48}/\textbf{6.37} & 5.69/8.07 & 4.74/6.67 & 4.17/5.79 & 4.29/5.99 & {\color{red} 4.07}/{\color{red} 5.68} \\ 
				120 & 17.45/23.56 & 11.97/17.45 & 5.51/8.01 & 5.10/7.48 & 8.06/438.85 & \textbf{5.03}/\textbf{7.37} & 6.33/9.22 & 5.34/7.78 & {\color{red} 4.52}/{\color{red} 6.37} & 4.83/6.89 & 4.54/6.42 \\ 
				90 & 19.63/25.21 & 12.24/16.96 & 6.25/9.26 & 5.62/8.50 & 9.12/503.83 & \textbf{5.50}/\textbf{8.29} & 6.57/9.90 & 5.76/8.48 & 4.78/6.82 & 5.28/7.69 & {\color{red} 4.70}/{\color{red} 6.74} \\ 
				60 & 24.54/31.38 & 14.36/19.48 & 7.61/11.62 & 6.57/10.43 & 11.07/619.98 & \textbf{6.36}/\textbf{10.11} & 7.56/11.94 & 6.48/10.12 & {\color{red} 5.37}/{\color{red} 7.79} & 6.08/9.20 & 5.45/7.90 \\ \hline

				Noise & \multicolumn{11}{c|}{$\sigma=0.3$ and $I_0=5\times 10^{5}$}\\\hline
				
				180 & 14.61/18.54 & 10.61/14.67 & 5.45/7.61 & 5.14/\textbf{7.31} & 9.04/491.19 & \textbf{5.13}/7.32 & 6.01/8.56 & 4.88/6.94 & 4.58/6.48 & 4.86/6.83 & {\color{red} 4.47}/{\color{red} 6.30} \\ 
				120 & 19.14/25.41 & 11.98/17.45 & 6.35/9.01 & 5.64/8.29 & 14.26/731.87 & \textbf{5.58}/\textbf{8.22} & 6.84/10.00 & 5.48/7.97 & {\color{red} 4.84}/6.99 & 5.37/7.69 & 4.85/{\color{red} 6.94} \\ 
				90 & 21.53/27.43 & 12.24/16.96 & 7.12/10.28 & 6.16/9.32 & 13.22/692.11 & \textbf{6.02}/\textbf{9.12} & 7.37/11.23 & 6.01/8.90 & 5.19/7.46 & 5.78/8.38 & {\color{red} 5.05}/{\color{red} 7.26} \\ 
				60 & 26.72/33.85 & 14.30/19.43 & 8.44/12.54 & 7.19/11.31 & 13.36/718.63 & \textbf{6.97}/\textbf{11.04} & 8.32/13.56 & 6.43/10.03 & 5.71/8.44 & 6.37/9.50 & {\color{red} 5.47}/{\color{red} 8.05} \\ \hline

				Data & \multicolumn{11}{c|}{Covid-19 Dataset}  \\ \hline
				Method & FBP & BM3D & HQS-CG & HQS$_p$-CG & PWLS-CSCGR & IHQS$_p$-CG & PDNet & FBPNet & MetaInvNet & IHQS$_p$-Net* & IHQS$_p$-Net \\ \hline
				
				Noise & \multicolumn{10}{c|}{$\sigma=0.3$}\\\hline
				
				180 & 12.88/16.26 & 10.49/13.64 & 4.03/5.62 & 3.77/5.27 & 6.86/289.66 & \textbf{3.73}/\textbf{5.19} & 5.50/7.50 & 4.14/5.68 & 3.54/4.78 & 3.65/4.97 & {\color{red} 3.42}/{\color{red} 4.65} \\ 
				120 & 16.75/22.56 & 11.69/15.79 & 4.75/6.84 & 4.33/6.32 & 7.16/296.62 & \textbf{4.21}/\textbf{6.07} & 6.06/8.50 & 4.75/6.76 & {\color{red} 3.77}/{\color{red} 5.19} & 4.19/5.85 & 3.85/5.32 \\ 
				90 & 18.60/24.11 & 12.01/15.91 & 5.51/8.23 & 4.86/7.44 & 7.99/342.59 & \textbf{4.69}/\textbf{7.06} & 6.15/8.99 & 5.15/7.46 & 4.04/5.69 & 4.58/6.57 & {\color{red} 3.94}/{\color{red} 5.52} \\ 
				60 & 23.30/30.45 & 14.40/19.70 & 6.98/11.05 & 5.93/9.81 & 9.81/432.80 & \textbf{5.61}/\textbf{9.19} & 7.08/11.03 & 6.13/9.80 & {\color{red} 4.54}/{\color{red} 6.62} & 5.41/8.06 & 4.57/6.66 \\ 
				
				\hline
				
				Noise & \multicolumn{11}{c|}{$\sigma=0.3$ and $I_0=5\times 10^{5}$}\\ \hline
				
				180 & 14.35/18.14 & 10.49/13.65 & 4.84/6.63 & 4.47/6.29 & 8.07/326.25 & \textbf{4.44}/\textbf{6.26} & 5.87/8.04 & 4.28/5.94 & 3.87/5.40 & 4.32/5.99 & {\color{red} 3.80}/{\color{red} 5.24} \\ 
				120 & 18.59/24.58 & 11.70/15.81 & 5.68/7.98 & 4.92/7.20 & 13.59/544.19 & \textbf{4.82}/\textbf{7.02} & 6.23/8.80 & 4.88/6.94 & 4.18/5.92 & 4.80/6.78 & {\color{red} 4.14}/{\color{red} 5.80} \\ 
				90 & 20.72/26.58 & 12.01/15.93 & 6.45/9.35 & 5.44/8.33 & 12.32/504.26 & \textbf{5.25}/\textbf{7.97} & 7.08/10.47 & 5.26/7.76 & 4.46/6.39 & 5.22/7.47 & {\color{red} 4.30}/{\color{red} 6.09} \\ 
				60 & 25.79/33.21 & 14.37/19.68 & 7.84/12.00 & 6.57/10.68 & 12.22/517.15 & \textbf{6.27}/\textbf{10.16} & 7.72/12.48 & 6.05/9.33 & 5.02/7.45 & 5.88/8.75 & {\color{red} 4.68}/{\color{red} 6.81} \\ 
				\hline
				
		\end{tabular}}
	\end{center}
	\label{rmse}
\end{table*}

\begin{figure*}[!t]
	\begin{center}
		\scriptsize
		\addtolength{\tabcolsep}{-5pt}
		\renewcommand\arraystretch{0.5}
		{\fontsize{9pt}{\baselineskip}\selectfont
			\begin{tabular}{cccccc}
				\includegraphics[width=0.155\linewidth]{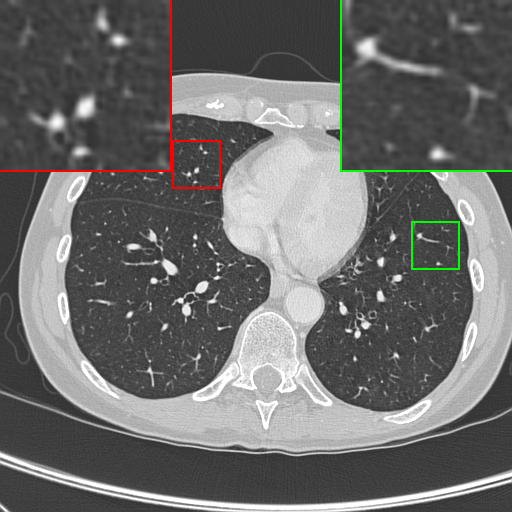} &
				\includegraphics[width=0.155\linewidth]{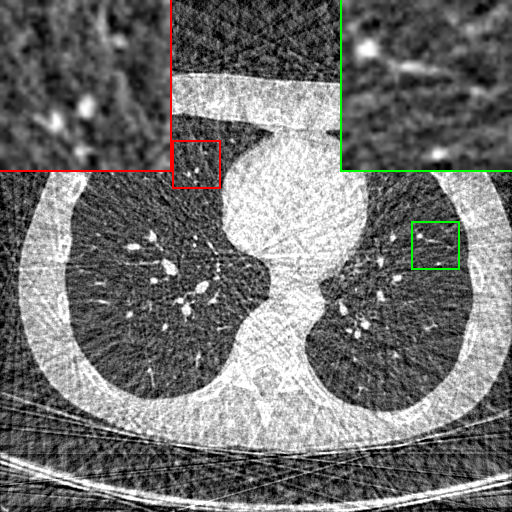} &
				\includegraphics[width=0.155\linewidth]{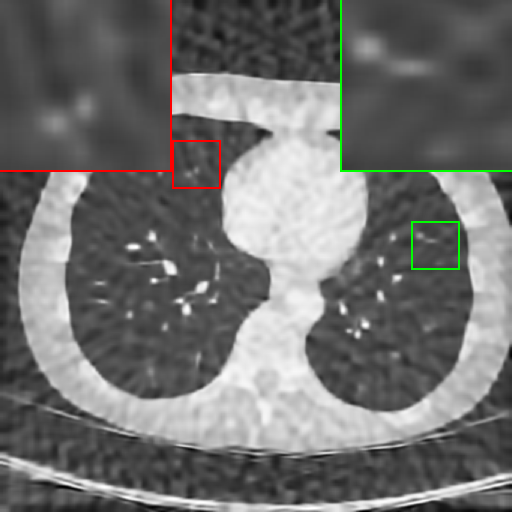} &
				\includegraphics[width=0.155\linewidth]{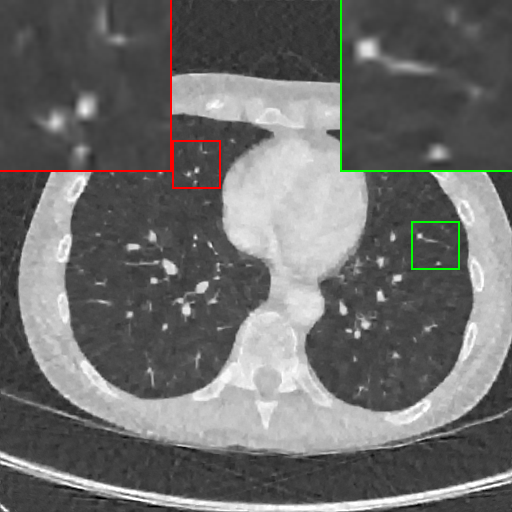} &
				\includegraphics[width=0.155\linewidth]{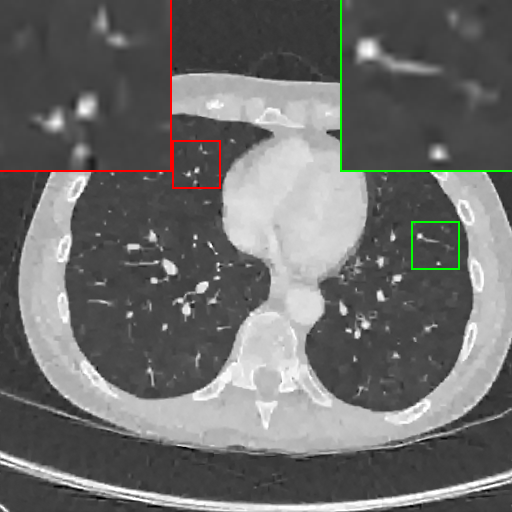} & 
				\includegraphics[width=0.155\linewidth]{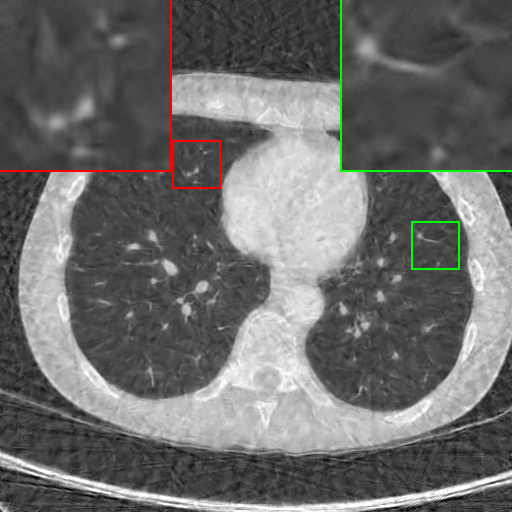} \\
				
				(a) Ground truth & (b) FBP & (c) BM3D & (d) HQS-CG & (e) HQS$_p$-CG & (f) PWLS-CSCGR  \\
				
				\includegraphics[width=0.155\linewidth]{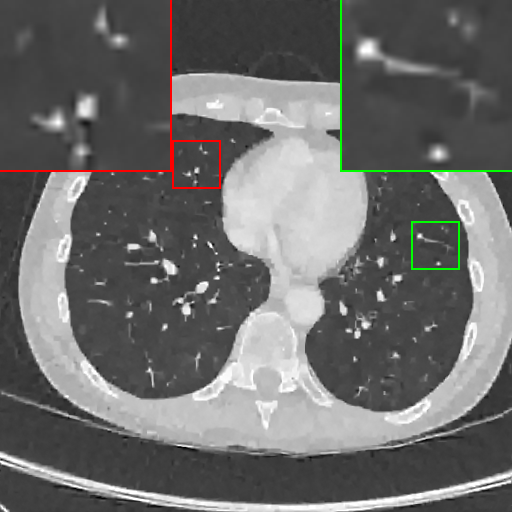} &
				\includegraphics[width=0.155\linewidth]{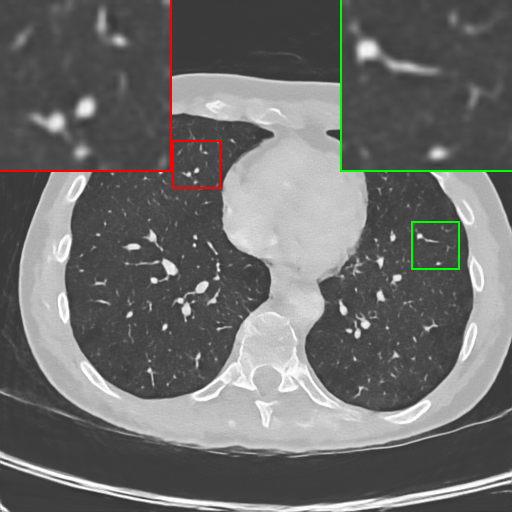} &
				\includegraphics[width=0.155\linewidth]{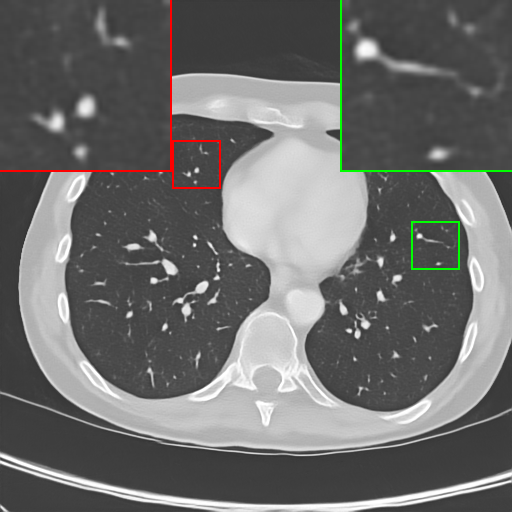} &
				\includegraphics[width=0.155\linewidth]{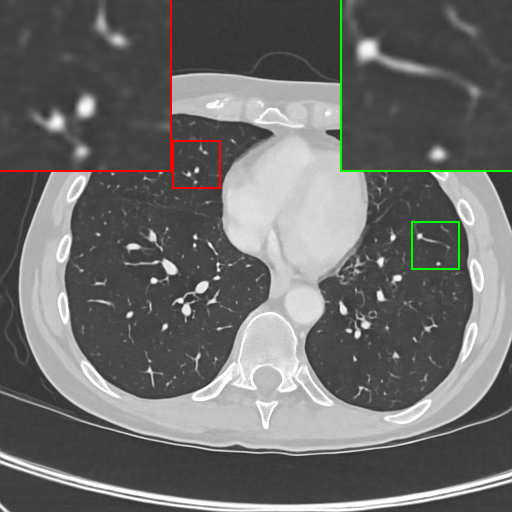} &
				\includegraphics[width=0.155\linewidth]{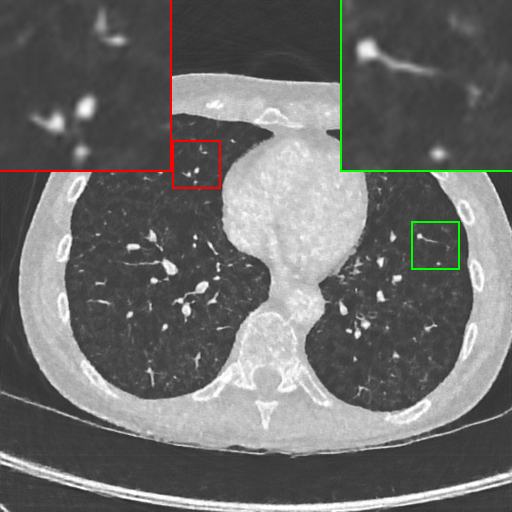} &
				\includegraphics[width=0.155\linewidth]{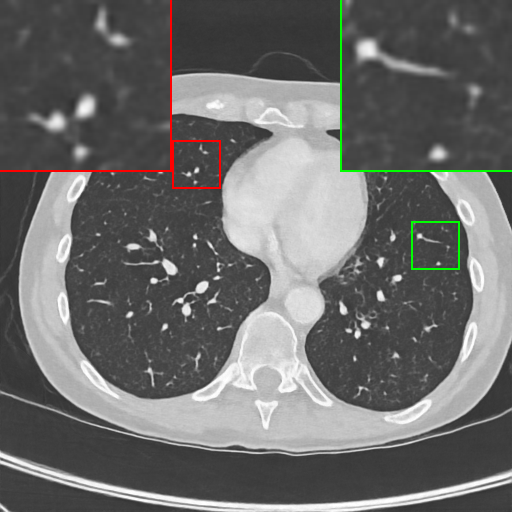} \\
				
				(g) IHQS$_p$-CG & (h) PDNet & (i) FBPNet & (j) MetaInvNet & (k) IHQS$_p$-Net* & (l) IHQS$_p$-Net \\
			\end{tabular}
		}
	\end{center}
	\caption{60 sparse views CT image reconstruction results and magnified ROIs from AAPM. The sinogram is corrupted by $\sigma=0.3$. 
	}
	\label{reconstruction2}
\end{figure*}

\begin{theorem}
	Let $\left\{v^{k}\right\}$ and $\left\{\bar{v}^{k}\right\}$ be the sequences generated by our algorithm. Then any cluster point of $\left\{v^{k}\right\}$ is the global minimum point of $L$. 
\end{theorem}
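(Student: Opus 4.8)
The plan is to run a standard three–stage argument for inertial block–descent schemes: (i) build a Lyapunov (merit) function that is non‑increasing along the iterates and decreases by at least a multiple of $\|u^{k+1}-u^{k}\|^{2}+\|z^{k+1}-z^{k}\|^{2}$; (ii) combine monotonicity with the boundedness of Lemma~\ref{lemma1} and the lower bound $L\ge 0$ to force the increments (and hence the inertial corrections) to vanish; (iii) pass to the limit in the optimality conditions of the two subproblems at a cluster point to identify it as a minimizer of $L$.

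\textbf{Stage 1 (sufficient decrease).} I would introduce a function of the form $\Phi_{k}=L(v^{k})+\tfrac{a}{2}\|u^{k}-u^{k-1}\|^{2}+\tfrac{b}{2}\|z^{k}-z^{k-1}\|^{2}$ with constants $a,b>0$ to be chosen. Two facts drive the estimate. The $u$‑subproblem is strongly convex with modulus at least $\sum_{i}\gamma_{i}>0$ (because $W_{i}^{T}W_{i}=I$), so its optimality at $u^{k+1}$ yields $\Theta_{k}(u^{k})\ge\Theta_{k}(u^{k+1})+\tfrac{\mu}{2}\|u^{k+1}-u^{k}\|^{2}$, where $\Theta_{k}$ is the $u$‑objective. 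The $z$‑subproblem is solved exactly, i.e.\ $z^{k+1}$ is a global minimizer of its objective $g_{k}$, so $g_{k}(z^{k+1})\le g_{k}(z^{k})$. Substituting the inertia identities $\bar z^{k+1}-z^{k+1}=\alpha(z^{k+1}-\bar z^{k})$ and $\bar u^{k}-u^{k}=\beta(u^{k}-\bar u^{k-1})$, expanding the squared norms to re‑express $\Theta_{k}$ and $g_{k}$ through the quantities appearing in $L(v^{k+1})$ and $L(v^{k})$, and completing the square gives $\Phi_{k+1}\le\Phi_{k}-c\big(\|u^{k+1}-u^{k}\|^{2}+\|z^{k+1}-z^{k}\|^{2}\big)$ for some $c>0$. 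The cross terms are absorbable precisely because $\alpha\in[0,1)$ and $\beta(1+\beta)<1$, i.e.\ $\beta<\tfrac{\sqrt{5}-1}{2}$ — the same inequality already used in Lemma~\ref{lemma1}. This bookkeeping, and calibrating $a,b,c$, is the most computation‑heavy part.

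\textbf{Stage 2 (vanishing increments and a cluster point).} Since every summand of $L$ is nonnegative, $\Phi_{k}\ge 0$; being non‑increasing it converges, and telescoping the descent inequality gives $\sum_{k}\big(\|u^{k+1}-u^{k}\|^{2}+\|z^{k+1}-z^{k}\|^{2}\big)<\infty$, hence $u^{k+1}-u^{k}\to 0$ and $z^{k+1}-z^{k}\to 0$. Feeding this into the two inertia updates and using $\alpha,\beta<1$ yields $\bar u^{k}-u^{k}\to 0$ and $\bar z^{k}-z^{k}\to 0$ as well. By Lemma~\ref{lemma1} the bounded sequence $\{v^{k}\}$ has a subsequence $v^{k_{j}}=(z^{k_{j}},u^{k_{j}})\to v^{*}=(z^{*},u^{*})$; along it $\bar z^{k_{j}}\to z^{*}$, $u^{k_{j}-1}\to u^{*}$, $\bar u^{k_{j}-1}\to u^{*}$, and $z^{k_{j}-1}\to z^{*}$.

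\textbf{Stage 3 (identifying the cluster point, and the main obstacle).} From the normal equations of the $u$‑step, $\big(A^{T}A+\sum_{i}\gamma_{i}I\big)u^{k_{j}}=A^{T}f+\sum_{i}\gamma_{i}W_{i}^{T}\bar z_{i}^{k_{j}}$; letting $j\to\infty$ gives $u^{*}=\arg\min_{u}L(z^{*},u)$, since this objective is convex in $u$. From the global optimality of $z^{k_{j}}$ in the $z$‑step we have, for every $z$, $\lambda\|z^{k_{j}}\|_{p}^{p}+\tfrac12\sum_{i}\gamma_{i}\|W_{i}\bar u^{k_{j}-1}-z^{k_{j}}_{i}\|^{2}\le \lambda\|z\|_{p}^{p}+\tfrac12\sum_{i}\gamma_{i}\|W_{i}\bar u^{k_{j}-1}-z_{i}\|^{2}$; letting $j\to\infty$ and using continuity of $t\mapsto|t|^{p}$ on $\mathbb{R}$ gives $z^{*}=\arg\min_{z}L(u^{*},z)$. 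The genuine obstacle is to upgrade these two one‑sided global‑optimality statements — which only say that $v^{*}$ is a coordinatewise global minimizer — into joint global optimality of $v^{*}$ for the non‑convex $L$. This is where the argument must work hardest, and it is here that one must exploit the \emph{exactness} (rather than mere local stationarity) of both block solves: the $L_{p}$ proximal operator $\mathrm{prox}_{p,\eta}$ returns the global minimizer of each decoupled one‑dimensional problem, and the $u$‑block is strongly convex, so that no joint perturbation $(z^{*}+s,\,u^{*}+t)$ can drop $L$ below $L(v^{*})$; combined with convergence of the values $L(v^{k})$ from Stage 1, this pins $L(v^{*})$ to the infimum of $L$.
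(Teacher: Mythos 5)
Your proposal diverges from the paper's argument in Stages 1--2 and rejoins it in Stage 3, where both run into the same difficulty; as written it has two genuine gaps.

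First, Stage 1 is asserted rather than proved, and it is not a step the paper takes at all: the paper goes directly from the boundedness of Lemma~\ref{lemma1} to extracting convergent subsequences and passing to the limit in the first-order conditions of the two subproblems. Your Lyapunov function $\Phi_k$ is the standard device for inertial schemes and, if carried out, would actually repair a looseness in the paper's own proof (the paper passes to limits in relations coupling indices $k_j$ and $k_j+1$ without first establishing $v^{k+1}-v^k\to 0$). But you never verify that $a,b,c$ can be calibrated for this particular scheme: the extrapolation here is applied to $\bar z^{k}$ and $\bar u^{k}$ rather than to $z^{k}$ and $u^{k}$, the $z$-objective is non-convex and non-Lipschitz at the origin, and the absorption of the cross terms under $\alpha<1$ and $\beta(1+\beta)<1$ is exactly the place where such arguments can fail. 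This is a plan, not a proof.

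Second, and more seriously, the final step of Stage 3 is not a valid argument. Exact global solution of each block yields precisely that $v^{*}$ is a coordinatewise global minimizer of $L$ (a Nash point), and for a non-convex objective this does not imply joint global optimality: nothing in the exactness of the block solves rules out a joint perturbation $(z^{*}+s,\,u^{*}+t)$ that lowers $L$, and convergence of the values $L(v^{k})$ only identifies the limit value, not the infimum of $L$. You correctly flag this as the obstacle and then assert it away. For what it is worth, the paper closes the same gap by claiming $L$ is quasi-convex because $\|z\|_{p}^{p}$ is quasi-convex; that claim is itself unsound for $0<p<1$ (the sublevel sets of $z\mapsto\sum_{i}|z_{i}|^{p}$ are non-convex in dimension at least two, and sums of quasi-convex functions need not be quasi-convex), so neither route as stated upgrades stationarity, or coordinatewise optimality, to global optimality. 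What both arguments legitimately deliver is that every cluster point is a stationary (indeed coordinatewise-optimal) point of $L$, which is the honest form of the conclusion.
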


\begin{proof}[Proof]
	By Lemmal \ref{lemma1}, the sequences $\left\{v^{k}\right\}$ and $\left\{\bar{v}^{k}\right\}$ are bounded, which implies the existence of convergent subsequences, denoted as
	\begin{equation*}
		v^{k_j} \rightarrow v^{*}, j \rightarrow \infty, \quad \mathrm{and} \quad \bar{v}^{k_j} \rightarrow \bar{v}^{*}, j \rightarrow \infty.
	\end{equation*}
	And further, it can be proved from (6) and (8) that 
	\begin{equation}
		\left \{
		\begin{array}{ll}
			\bar{z}^{k+1} + \alpha \bar{z}^{k} = (1+\alpha)z^{k+1}, \\
			\bar{u}^{k+1} + \beta \bar{u}^{k} = (1+\beta)u^{k+1}. \\
		\end{array}
		\right.
		\label{eq14}
	\end{equation}
	Passing to the limit in (\ref{eq14}) along the subsequences $\left\{v^{k_j}\right\}$ and $\left\{\bar{v}^{k_j}\right\}$ yields
	\begin{equation}
		\left \{
		\begin{array}{ll}
			(1+\alpha) \bar{z}^{*} = (1+\alpha)z^{*}, \\
			(1+\beta) \bar{u}^{*} = (1+\beta)u^{*}, \\
		\end{array}
		\right.
	\end{equation}
	which means $\bar{z}^{*} = z^{*}$ and $\bar{u}^{*} = u^{*}$. It follows from (5) and (7) that
	\begin{equation}
		\left \{
		\begin{array}{ll}
			0 \in \lambda\partial\|z^{k+1}\|_{p}^{p} - \sum_{i=1}^{m}\gamma_{i}(W_{i}\bar{u}^{k}-z^{k+1}_{i}), \\
			0 = A^{T}(Au^{k+1}-f)+\sum_{i=1}^{m}\gamma_{i}W_{i}^{T}(W_{i}u^{k+1}-\bar{z}^{k+1}_{i}).\\
		\end{array}
		\right.
		\label{eq15}
	\end{equation}
	Taking limits on both sides of (\ref{eq15}) along the subsequences $\left\{v^{k_j}\right\}$ and $\left\{\bar{v}^{k_j}\right\}$, when $j \rightarrow \infty$ and using closedness of subdifferential, we obtain
	\begin{equation}
		\left \{
		\begin{array}{ll}
			0 \in \lambda\partial\|z^{*}\|_{p}^{p} - \sum_{i=1}^{m}\gamma_{i}(W_{i}{u}^{*}-z^{*}_{i}), \\
			0 = A^{T}(Au^{*}-f)+\sum_{i=1}^{m}\gamma_{i}W_{i}^{T}(W_{i}u^{*}-{z}^{*}_{i}).\\
		\end{array}
		\right.
		\label{eq16}
	\end{equation}
	In particular, $v^{*}$ is a stationary point of $L$. $L$ is a quasi-convex function, since $\|z\|_{p}^{p}$ is a quasi-convex function. Consequently, this property results in that  $v^{*} = (z^{*},u^{*})$ is the global minimum point of $L$.
\end{proof}

\section{Supplementary experimental data}

\subsection{Comparison of MAE and RMSE}
Table \ref{rmse} gives the comparison of MAE and RMSE for all the compared algorithms. The results in Table \ref{rmse} are similar to those in Table I. IHQ$S_p$-CG performs optimally in the traditional algorithm. IHQS$_p$-Net also performs best among the deep learning algorithms in aggregate. This shows the robust results of the proposed IHQS$_p$ algorithm under different metrics.

\subsection{Supplementary visualisation results}
Fig. \ref{reconstruction2} illustrates a qualitative comparison of the 60-degree observation resulting from the reconstruction. It can be noticed that MetaInvNet is prone to pseudo-generation. HQS-CG, on the other hand, is not able to complete the reconstruction details. Overall, IHQS$_p$-CG and IHQS$_p$-Net are good at preserving details in both traditional and deep learning algorithms.

\end{document}